\documentclass[a4paper,11pt]{article}
\usepackage[utf8]{inputenc}
\usepackage{amsmath,amssymb,amsfonts,amsthm,mathrsfs,graphicx}

\usepackage{multirow}
\usepackage{a4wide}
\usepackage{enumerate}
\usepackage{todonotes}

\usepackage{authblk}

\usepackage{xcolor}

\usepackage{hyperref}

\usepackage{a4wide}

\title{Positional $s$-of-$k$ games}

\author[1]{Eric Duch\^{e}ne}
\author[2]{Valentin Gledel}
\author[3]{Milo\v{s} Stojakovi\'c}

\affil[1]{Univ Lyon, CNRS, INSA Lyon, UCBL, Centrale Lyon, Univ Lyon 2, LIRIS, UMR5205, F-69622 Villeurbanne, France}
\affil[2]{Université Savoie Mont Blanc, CNRS UMR5127, LAMA, Chambéry, F-73000, France}
\affil[3]{Department of Mathematics and Informatics, Faculty of Sciences, University of Novi Sad, Serbia}
\date{}

\newtheorem{theorem}{Theorem}
\newtheorem{lemma}[theorem]{Lemma}

\newtheorem{corollary}[theorem]{Corollary}
\newtheorem{observation}[theorem]{Observation}

\newcommand{\scr}{\text{SC}}
\newcommand{\scrp}{\text{SC}_2}

\usepackage{tikz}
\newcommand{\rhombus}{\tikz \draw (0,0) -- (.23,0) -- (.38,.22) -- (0.15,.22) -- cycle;}
\newcommand{\hexagon}{\tikz \draw (0.1443,0) -- (0.07215,0.1249) -- (-0.07215,0.1249) -- (-0.1443,0) -- (-0.07215,-0.1249) -- (0.07215,-0.1249) -- cycle;}
\newcommand{\triangl}{\tikz \draw (0,0.25) -- (0.14435,0) -- (-0.14435,0) -- cycle;}

\newcommand{\hexa}{{\vcenter{\hbox{\;\!\scalebox{0.8}{\textnormal{\hexagon}}}}}}
\newcommand{\tria}{{\vcenter{\hbox{\scalebox{0.8}{\textnormal{\triangl}}}}}}
\newcommand{\squa}{{\vcenter{\hbox{\;\!\scalebox{0.75}{$\square$}}}}}
\newcommand{\rhom}{{\vcenter{\hbox{\scalebox{.8}{\rhombus}}}}}

\newcommand{\cF}{{\cal F}}
\newcommand{\cG}{{\cal G}}
\newcommand{\cP}{{\cal P}}

\begin{document}
\maketitle

\begin{abstract}

We introduce a general framework for positional games in which players score points by claiming a prescribed portion of each winning set, extending the notion of scoring Maker–Breaker games. In the scoring variant, Maker gains a point by fully claiming a winning set, while Breaker aims to minimize Maker’s total score. In this paper, we generalize these models for all $k$-uniform positional games by fixing an integer threshold $s\in\{1,2,\dots, k \}$ so that a player scores a point whenever she claims at least $s$ elements of a winning set of size $k$. We refer to this class as $s$-of-$k$ games. Such formulation allows for a flexible description of scoring objectives that appear in both theoretical models and real-life board games. 

We further investigate the impact of strategy restrictions on the achievable score. In particular, we analyze $s$-of-$k$ games both under optimal play, where the score is denoted by $\scr$, and under the additional constraint that Maker is restricted to a pairing strategy. The corresponding score in this setting is denoted by $\scrp$. While the unrestricted score captures the standard notion of optimal play in scoring positional games, the pairing-restricted score allows us to observe Maker's loss incurred by limiting her to these standard strategies.

We comprehensively study $s$-of-$k$ games played on regular grids, which provide a natural and uniform setting for illustrating the general framework. After developing several general tools for the analysis of both scores, we complement them by a number of ad-hoc strategies tailored for particular cases of these games, to obtain both upper and lower bounds for the two scores on triangular, square, rhombus and hexagonal grids.

%\mSt{Up to here another version of abstract.}
%\vspace*{5mm}
%
%
% This paper explores positional games, a well-known subclass of combinatorial games that includes popular recreational games like Tic-Tac-Toe and Hex. Recently, a scoring version of such games has been formalized, in which Maker's goal is to claim as many winning sets as possible. In the current paper, we generalize this family of games by introducing the set of $s$-of-$k$ games (with $1\leq s\leq k$), that are played on a $k$-uniform hypergraph, and where the objective of Maker consists in claiming at least $s$ vertices per winning set to get one point. For such games, we are interesting in the optimal score for each player (denoted by $\scr$), as well as the optimal score when using a pairing strategy (denoted by $\scrp$). Bounds on the optimal scores are first given in the general case, in particular by using a linear programming approach that provides upper bounds for $\scrp$. Then, we investigate the special case of regular structures corresponding to $k$-uniform hypergraphs such as triangular, square or hexagonal grids. Lower and upper bounds are given for both $\scr$ and $\scrp$, considering different kinds of strategies and techniques.
\end{abstract}

%\todo[inline]{the "classical" scoring variant -> it xas introduced not too long ago, maybe classical is not the right term} \mSt{True. Removed.}

{\color{red}
}

\section{Introduction}

\subsection{Scoring Positional Games}

Positional games are a well-studied subclass of combinatorial games, that includes popular recreational games like Tic-Tac-Toe and Hex. Structurally, a positional game is a hypergraph $(V,\mathcal{F})$, where $V$ is a finite set, the \emph{board} of the game, and $\mathcal{F} \subseteq 2^V \setminus \{\emptyset\}$ is a family of sets, the \emph{winning sets}. The game is played by two players who alternately claim the unclaimed elements of the board, until all of them are claimed.

When it comes to determining the winner, probably the most studied convention of positional games is the \emph{Maker-Breaker games}, already studied in the seminal papers of Erd\H{o}s and Selfridge~\cite{erdos-1973} and Chv\'atal and Erd\H{o}s~\cite{chvatal-1978}. The players are called Maker and Breaker, and there are two possible outcomes of the game; Maker's goal is to claim all elements of a winning set, while Breaker wins otherwise, i.e., if Breaker claims an element in every winning set. There are two books that cover the topic of positional games in more detail~\cite{beck-book, hkss}. 

The variant of Maker-Breaker games in which the aim of Maker is to fully claim \emph{as many winning sets as possible} appears in numerous places in the literature. It is discussed in several parts of the book~\cite{beck-book}, studied in~\cite{bedlu, dkms, rat, rat2, ss}, and it is more recently examined in more detail under the name \emph{scoring positional games} in~\cite{scoringPG}. %They can be played with the usual conventions of positional games. 
In such games, the game play is the same as in a standard positional game on a hypergraph $\mathcal{H}=\{V,\mathcal{F}\}$, but Maker gets a point each time she claims all the vertices of a hyperedge. The goal of Maker consists in maximizing her score, i.e., the number of hyperedges that she claims completely up until the end of the game. Breaker, on the other hand, aims at minimizing this score of Maker.

\emph{Majority games} are a further natural generalization of scoring positional games, that fits better with some real-life board games, like {\sc Conquete}\footnote{\url{https://escaleajeux.fr/jeu/conqd.0.0}} or {\sc Hedron}\footnote{\url{https://boardgamegeek.com/boardgame/11921/hedron}}. In such games, the game play remains the same, while Maker gets one point per hyperedge when she has claimed (strictly) more vertices than her opponent (who may be either Breaker, or another Maker player keeping track of his own score). Such games have been recently investigated by Bensmail and McInerney within the study of Vertex Capturing Game~\cite{vcg}. In their version, the players claim the edges of a graph and get a point each time they have the majority of incident edges of a vertex. They resolved the outcome of the game for paths, cycles, complete bipartite graphs and grids. 

Our current proposal is a way to define a general framework for the study of such games. Indeed, ``majority'' may not always be suitable to describe a desire of Maker to claim some predefined part of a winning set. Further more, when the size of the winning set is even, the case of ties is to be dealt with. Therefore, instead of simply considering the majority, setting a fixed number of elements, or a fraction of the size of the winning sets, turns out to be more suitable in full generality.

Several known games in the literature could be embedded within this framework. Given a fixed graph $H$, the $H$-game is a well-studied positional game played on the edge set of a base graph $E(G)$, where the winning sets are all copies of $H$ in $G$~\cite{beck-book, Hgame, hkss}. A standard approach to study the prospects for claiming a whole copy of a graph $H$ have been tied to the possibility to claim a (large enough) number of copies of $H^-$, a graph obtained by removing an edge from $H$, see, e.g.,~\cite{bedlu} or~\cite{ss}. This approach can be extended to looking back at $k$ moves before a possible Maker's win, counting copies of $H^{-k}$, a graph obtained by removing $k$ edges from $H$, for some integer $k$. Following the same spirit, the Maker-Breaker Hamiltonian Path game, played on edges, can be seen as an instance of the Maker-Breaker Hamiltonian cycle game where it suffices to get an $(n-1)$ part of a winning set to win (where $n$ is the number of vertices of the graph). In Kaplansky's $n$-in-a-line game, see~\cite{beck-kap, kr}, the winning sets are straight lines that typically, depending on the board, have much more than $n$ elements. Each of the players has the same goal, to be the first to claim a subset of certain cardinality, namely $n$, of a winning set, while the opponent has not claimed even a single element of that winning set.   \\

Therefore, in the current paper, we generalize the definition of scoring games given in \cite{scoringPG} with a constant parameter $s$ corresponding to the number of vertices required in a winning set to score one point. This particularly makes sense if all the winning sets have the same size, i.e., when the hypergraph is $k$-uniform. We refer to these games as \emph{$s$-of-$k$ games}. Note that a similar extension could be done with a \emph{ratio} of the size of the winning sets as the threshold for a score (this is essentially the same as $s$-of-$k$-games when the hypergraph of the game is uniform).

%In the current paper, we propose a generalization of their game, since in our version, the degree of a vertex is not bounded in the hypergraph (which is the case of Conquete or the vertex capturing game, where a vertex of the underlying hypergraph is in a most two hyperedges). But we do not embedd fully the vertex capturing game as we are k-uniform

\subsection{$s$-of-$k$ games}

Here we formally define $s$-of-$k$ games. Let $\cG$ be a positional game played on a $k$-uniform hypergraph $(V, \cF)$, and let $s\in \{1,2, \dots, k\}$ be an integer. Maker and Breaker alternately claim unclaimed elements of the board, with Maker going first, until all the elements are claimed. We are interested in the number of winning sets that have at least $s$ elements claimed by Maker -- Maker's goal is to maximize this number, while Breaker wants to minimize it. When both players play optimally, we refer to this number as the \emph{score} of the game $\cG$. At the end of the game, we say that a winning set is \emph{good} if the number of Maker's claimed elements in it is at least $s$, and we say that it is \emph{bad} if that number is below $s$.

The score of the game under optimal play of both players will be denoted by~$\scr (\cG, s)$. The game {\sc Incidence}, introduced in \cite{scoringPG}, that is played on the vertices of a graph, actually corresponds to the $2$-of-$2$ game, as Maker gets one point each time she claims both vertices of an edge. From this previous work, it is already known that computing $\scr (\cG, s)$ in the general case is a PSPACE-complete problem, since it is the case for {\sc Incidence}.

In addition, the following observation highlights that the roles of Maker and Breaker are somehow analogue.
\begin{observation} \label{obs:1}
$\scr (\cG, s) = |\cF| - \scr (\cG, k-s+1) + O(\Delta(\cF))$, where $\Delta(\cF)$ is the maximum degree of $\cF$.
\end{observation}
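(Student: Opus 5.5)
The plan is to combine a complementarity identity at the end of the game with a strategy-stealing argument that swaps the roles of Maker and Breaker. The $O(\Delta(\cF))$ term pays for the one move by which the first player leads.

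\textbf{Complementarity.} Fix any complete play of $\cG$. Since every winning set has exactly $k$ elements, each winning set $F$ satisfies exactly one of two conditions: Maker claims at least $s$ elements of $F$, or Breaker claims at least $k-s+1$ elements of $F$. Hence, in every final position,
\[
\#\{F : \text{Maker has} \ge s \text{ in } F\} \;+\; \#\{F : \text{Breaker has} \ge k-s+1 \text{ in } F\} \;=\; |\cF|.
\]
Minimizing Maker's count is therefore equivalent to maximizing the second count. So Breaker in the $s$-of-$k$ game is really a player trying to maximize the number of sets in which she holds at least $k-s+1$ elements, but she moves second.

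\textbf{Comparing the two games.} Let $A$ denote the value of the second-player version of the $(k-s+1)$-of-$k$ game, so that $\scr(\cG,s)=|\cF|-A$. It remains to show $|A-\scr(\cG,k-s+1)|=O(\Delta(\cF))$.

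For $A\le \scr(\cG,k-s+1)$: the first player (Maker in the $(k-s+1)$-of-$k$ game) makes an arbitrary first move, then copies an optimal strategy of the second player in the second-player game. Whenever that strategy asks for the already-taken extra vertex, she plays another arbitrary vertex. The standard strategy-stealing observation is that an extra claimed vertex never hurts a maximizer. So she gets at least $A$, which gives $\scr(\cG,k-s+1)\ge A$.

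For the reverse inequality: a second-moving maximizer imagines the opponent's first move as her own extra move. She then plays an optimal first-player strategy as if that imaginary position were real, and re-assigns the vertex whenever needed. At the end, at most a bounded number of vertices differ between her actual set and the imagined one; I expect exactly one vertex, but at most a constant if swaps accumulate. Changing the ownership of one vertex $v$ changes the number of good sets by at most $\deg(v)\le\Delta(\cF)$, since only sets containing $v$ are affected. Hence $A\ge \scr(\cG,k-s+1)-c\,\Delta(\cF)$.

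\textbf{Main obstacle.} The delicate step is making the imaginary-move bookkeeping rigorous. The opponent, here the minimizer of the $(k-s+1)$-count, may later claim the vertex that the maximizer is pretending to own. The fix I would try first is to keep a single ``phantom'' vertex $p$. Whenever the opponent takes a vertex the strategy assumes she owns, she declares the vertex she is forced to take arbitrarily to be the new phantom. This keeps the imagined position and the real one differing in at most one element throughout, and the discrepancy is only accounted for at the end. If the board has odd size, the parity of the final move adds at most one more such discrepancy, still $O(\Delta(\cF))$.
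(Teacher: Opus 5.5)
Your proposal is correct and follows the paper's argument: complementarity between Maker's good sets in the $s$-of-$k$ game and the sets in which Breaker holds at least $k-s+1$ elements, with the $O(\Delta(\cF))$ term paying for the first-move offset. The paper only sketches that offset; you justify it by strategy stealing in both directions. In the reverse direction the bookkeeping is cleanest if you treat the opponent's first move $b_1$ as unclaimed in the imagined first-player game, so that he can never take a vertex you only imagine owning. Then the real and imagined final positions differ in at most two vertices, and only $b_1$ can hurt you, so the loss is at most $\Delta(\cF)$.
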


In other words, in an $s$-of-$k$ game, the objective of Breaker consists in ensuring that there are as many winning sets with $(k-s+1)$ vertices claimed by him. I.e., Breaker has the role of Maker in the $(k-s+1)$-of-$k$ game on the same hypergraph. Note that the $O(\Delta(\cF))$ term appears due to the fact that Maker starts both games, knowing that she in her first move cannot affect more than $\Delta(\cF)$ winning sets.\\

In the current study, we will provide an analysis of several $s$-of-$k$ games.
%according to the type of strategy that may be used. 
As pairing strategies are frequently used in the literature (due to their simplicity, as well as their versatility), we will devote special attention to the analysis of games when Maker is restricted to use a pairing strategy. It is interesting to study how far from the optimality pairing strategies can be. Such a discussion has already been considered in the book~\cite{beck-book} by Beck, who systematically studied games for which a player can not only win, but win by following a pairing strategy, which is clearly a stronger requirement. Recall that in a pairing strategy (say, for Maker), the vertices of the hypergraph are paired together in advance, in disjoint pairs. For each vertex $v$ claimed by Breaker during the game, the answer of Maker is the vertex that is paired with $v$. Note that pairing strategies correspond to non-adaptative strategies, since the responses of Maker are fixed before the game starts.  

In an $s$-of-$k$ game, when Maker is restricted to use a pairing strategy, the score of the game will be denoted by~$\scrp (\cG, s)$. This definition naturally yields the following observation.

\begin{observation} \label{obs:2}
$\scrp (\cG, s)\leq \scr (\cG, s)$.
\end{observation}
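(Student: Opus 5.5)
The inequality $\scrp (\cG, s)\leq \scr (\cG, s)$ says that restricting Maker to a smaller class of strategies cannot raise the value of a max–min game. The plan is to write both scores as max–min expressions over strategy sets, note that Maker's set in the pairing case is a subset of her set in the unrestricted case, and then compare the two values directly.

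First, fix the notation. The game is finite, because exactly $|V|$ moves are made. Let $\Sigma_M$ be the set of all Maker strategies and $\Sigma_B$ the set of all Breaker strategies. For a pair $(\sigma,\tau)$, let $\mathrm{val}(\sigma,\tau)$ be the number of good winning sets at the end of the resulting play. Then
\[
\scr(\cG,s)=\max_{\sigma\in\Sigma_M}\min_{\tau\in\Sigma_B}\mathrm{val}(\sigma,\tau).
\]
This is well defined since the game is finite with perfect information, so the max and min are attained.

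Next, describe the pairing case. A pairing strategy is a fixed partition of $V$ (or of most of $V$) into disjoint pairs. Whenever Breaker claims a vertex, Maker answers with its partner if that partner is free; otherwise, and also on her first move, she plays some arbitrary free vertex by a fixed rule. Each such rule is a particular element of $\Sigma_M$. So the set $\Pi_M$ of pairing strategies satisfies $\Pi_M\subseteq\Sigma_M$, and
\[
\scrp(\cG,s)=\max_{\pi\in\Pi_M}\min_{\tau\in\Sigma_B}\mathrm{val}(\pi,\tau).
\]
Breaker remains unrestricted here.

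The conclusion is then immediate: a maximum over a subset is at most the maximum over the whole set. In words, if Maker guarantees a score $t$ with some pairing strategy, then that same strategy is an admissible strategy in the unrestricted game and guarantees $t$ there as well.

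The main obstacle is not mathematical. It is making precise what counts as a pairing strategy, in particular Maker's first move and her moves when the partner vertex is already taken. Under any reasonable convention, every such rule is a deterministic Maker strategy of the full game, and that membership is all the argument needs.
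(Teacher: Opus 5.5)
Your proposal is correct and matches the paper, which states the observation without a separate proof: it follows from the definition, since every pairing strategy is itself an admissible Maker strategy in the unrestricted game. So the max--min value over the pairing strategies cannot exceed the value over all Maker strategies, exactly as you argue via $\Pi_M\subseteq\Sigma_M$ with Breaker unrestricted in both games.
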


It is folklore (confirmed by numerous examples, see, e.g.,~\cite{beck-book}) that in Maker-Breaker games not all optimal strategies can be pairing strategies. However, proving that there does not exist an optimal pairing strategy in (scoring) positional game that we study here is not a straightforward task. In the context of $s$-of-$k$ games, we would like to show that $\scr (\cG, s)$ and $\scrp (\cG, s)$ have different values for some $\cG$ and $s$. By considering {\sc incidence} (i.e., the $2$-of-$2$ game) played on a cycle of size $14$, we have verified (with a relatively simple Python code that tests all the possible pairings, see \url{https://github.com/vgledel/test-pairing}) that $\scrp (C_{14}, 2)=2$, whereas it is known from~\cite{scoringPG} that $\scr (C_{14}, 2)=3$.

This result ensures that in general there is no equality between $\scr (\cG, s)$ and $\scrp (\cG, s)$. We conjecture that, for larger values of $n$, we still have a gap between $\scr (C_n, 2)$ and $\scrp (C_n, 2)$. 

\subsection{Exposition of the work}

In Section~\ref{sec:general} we present and develop several general results, that will be put to use later. The first one is a rewriting of the celebrated Erd\H{o}s-Selfridge Theorem~\cite{erdos-1973} in the context of scoring positional games. Two more results that we add are novel and specific to $s$-of-$k$ games, including a general approach that provides upper bounds for $\scrp (\cG, s)$ by analysing a random strategy for Breaker against a pairing Maker. This result will be useful to obtain, in many situations, non-trivial upper bounds on the best score available for Maker that is restricted to using pairing strategies. 

%\mSt{Fix numbers of sections, and the order!!}
In the rest of the paper, we will investigate particular $s$-of-$k$ games played on regular structures, that in particular imply the $k$-uniformity of the hypergraph. More precisely, we will explore $s$-of-$k$ games played on the vertices of several standard grids: the triangular grid where the winning sets are triangles $\cG_\tria$ (Section~\ref{sec:triangle}), the square grid $\cG_\squa$ where the winning sets are squares (Section~\ref{sec:square}), the rhombus grid where the winning sets are rhombi on the triangular grid $\cG_\rhom$ (Section~\ref{sec:rhombus}), and finally the hexagonal grid $\cG_\hexa$ where the winning sets are all hexagons (Section~\ref{sec:hexagon}). \\

Our study will deal with finite but large grids, where the number of winning sets is denoted by $n$. We generally note that most of the strategies we develop will be tailored to work at a distance from the boundary of the grid.
We will also routinely not discuss the particular shape of the grid, as this will not affect our analysis. Our only assumption throughout the paper is that all the grids are ``two dimensional'', i.e., that the number of winning sets that lie at a constant distance to the boundary is asymptotically small compared to $n$. In the expression of our results, this recurrent $o(n)$ error in the score will be mentioned but we will not give it any further attention. 

All the results provided in sections 3 through 6 can be found in the summarizing Table~\ref{tab:results}. It contains lower and upper bounds for two types of strategies -- optimal and pairing, for all four types of grids, and all relevant values of $s$. %Throughout the paper, we will explore various techniques to prove these results, ranging from original tilings to counting arguments. 

\begingroup
\setlength{\tabcolsep}{4pt}

\begin{table}[h]
    \centering
\begin{tabular}{|c|c|c|c|c|c|}\hline
  \multicolumn{2}{|c|}{\multirow{2}{*}{Problem}} & \multicolumn{2}{|c|}{$\scr$} & \multicolumn{2}{|c|}{$\scrp$} \\ \cline{3-6}
  \multicolumn{2}{|c|}{} & lower bound &  upper bound & lower bound &  upper bound \\ \hline
  \multirow{3}{*}{$\cG_\tria$} & $s=1$ & $\geq 7n/8$ (E-S) & $\leq 27n/28$ (Th~\ref{thm:Maker-s3}) & $\geq 3n/4$ (Th~\ref{thm:scrp-trig-s1}) & $\leq 15n/16$ (Th~\ref{thm:upper_pairing}) \\ \cline{2-6}
   & $s=2$ & $n/2$ (Lm~\ref{lem:kodd}) & $n/2$ (Lm~\ref{lem:kodd})& $\geq 3n/8$ (Th~\ref{thm:scrp-trig-s2}) & $\leq n/2$ (Th~\ref{thm:upper_pairing})  \\ \cline{2-6}
   & $s=3$ & $\geq n/28$ (Th~\ref{thm:Maker-s3}) & $\leq n/8$ (E-S) & $\geq n/28$ (Th~\ref{thm:Maker-s3}) & $\leq n/8$ (Th~\ref{thm:upper_pairing}) \\ \hline
  \multirow{4}{*}{$\cG_\squa$} & $s=1$ & \multicolumn{4}{|c|}{$\scr(\cG_\squa, 1) = \scrp(\cG_\squa, 1) = n$ (Obs~\ref{obs:square-s14})} \\ \cline{2-6}
   & $s=2$ & $\geq 2n/3$ (Th~\ref{thm:square-s2}) & $\leq 13n/15$ (Th~\ref{thm:SCsquares=3}) & $\geq 2n/3$ (Th~\ref{thm:square-s2})& $\leq 3n/4$ (Th~\ref{thm:upper_pairing}) \\ \cline{2-6}
   & $s=3$ & $\geq 2n/15$ (Th~\ref{thm:SCsquares=3}) & $\leq n/3$ (Th~\ref{thm:square-s2}) & $\geq n/8$ (Th~\ref{thm:fractal-pairing}) & $\leq 5n/16$ (Th~\ref{thm:upper_pairing}) \\ \cline{2-6}
   & $s=4$ & \multicolumn{4}{|c|}{$\scr(\cG_\squa, 4) = \scrp(\cG_\squa, 4) = 0$ (Obs~\ref{obs:square-s14})} \\ \hline
  \multirow{4}{*}{$\cG_\rhom$} & $s=1$ & $\geq 15n/16$ (E-S) & $\leq 77n/78$ (Th~\ref{thm:rhombus-maker-s4}) & $\geq 11n/12$ (Th~\ref{thm:rhombus-pairing-s1}) & $\leq 23n/24$ (Th~\ref{thm:upper_pairing}) \\ \cline{2-6}
   & $s=2$ & $\geq 19n/36$ (Th~\ref{thm:rhom-pairing-LB}) & $\leq 89n/96$ (Th~\ref{thm:rhombus-pairing-s3}) & $\geq 19n/36$ (Th~\ref{thm:rhom-pairing-LB}) & $\leq 71n/96$ (Th~\ref{thm:upper_pairing}) \\ \cline{2-6}
   & $s=3$ & $\geq 7n/96$ (Th~\ref{thm:rhombus-pairing-s3}) & $\leq 17n/36$ (Th~\ref{thm:rhom-pairing-LB}) & $\geq 7n/96$ (Th~\ref{thm:rhombus-pairing-s3}) & $\leq 5n/16$ (Th~\ref{thm:upper_pairing}) \\ \cline{2-6}
   & $s=4$ & $\geq n/78$ (Th~\ref{thm:rhombus-maker-s4}) & $\leq n/16$ (E-S) & $\geq 0$ & $\leq n/16$ (Th~\ref{thm:upper_pairing}) \\ \hline
  \multirow{6}{*}{$\cG_\hexa$} & $s=1$ & \multicolumn{4}{|c|}{$\scr(\cG_\hexa, 1) = \scrp(\cG_\hexa, 1) = n$ (Obs~\ref{obs:hexa-s1256})} \\ \cline{2-6}
   & $s=2$ & \multicolumn{4}{|c|}{$\scr(\cG_\hexa, 2) = \scrp(\cG_\hexa, 2) = n$ (Obs~\ref{obs:hexa-s1256})} \\ \cline{2-6}
   & $s=3$ & $\geq 3n/4$ (Th~\ref{thm:hex-s3-pairing})& $\leq 5n/6$ (Th~\ref{thm:hex-s4-Maker-nopairing}) & $\geq 3n/4$ (Th~\ref{thm:hex-s3-pairing})& $\leq 5n/6$ (Th~\ref{thm:hex-s4-Maker-nopairing})\\ \cline{2-6}
   & $s=4$ & $\geq n/6$ (Th~\ref{thm:hex-s4-Maker-nopairing}) & $\leq n/4$ (Th~\ref{thm:hex-s3-pairing}) & $\geq n/10$ (Th~\ref{thm:hex-s4-Maker-pairing}) & $\leq n/4$ (Th~\ref{thm:hex-s3-pairing}) \\ \cline{2-6}
   & $s=5$ & \multicolumn{4}{|c|}{$\scr(\cG_\hexa, 5) = \scrp(\cG_\hexa, 5) = 0$ (Obs~\ref{obs:hexa-s1256})} \\ \cline{2-6}
   & $s=6$ & \multicolumn{4}{|c|}{$\scr(\cG_\hexa, 6) = \scrp(\cG_\hexa, 6) = 0$ (Obs~\ref{obs:hexa-s1256})} \\ \hline
\end{tabular}
    \caption{The different bounds on $\scr$ and $\scrp$ found in this paper (omitting the $o(n)$ term for conciseness). In the table, E-S refers to Theorem~\ref{thm:ES}, Erd\H os-Selfrige Theorem applied to scoring positional games.}
    \label{tab:results}
\end{table}

\endgroup

% moved!
%
% \begin{prop}
% Let $G$ be a rectangular grid with $n$ squares. We have $n/8\leq Ms(G)$.
% \end{prop}
% \begin{proof}
% Todo and todraw. Proof with the fractal tiling.
% \end{proof}

% \begin{prop}
% Let $G$ be a rectangular grid with $n$ squares. We have $2n/15\leq Ms(G)$.
% \end{prop}
% \begin{proof}
% Todo. We first prove that $Ms(G_{3,5})=Bs(G_{3,5})=2$. Then we tile $G$ with $G_{3,5}$ grids and Maker applies her strategy on each $G_{3,5}$. The squares between the tiles are lost for Maker. Thus she scores 2 points for every set of 15 squares.
% \end{proof}

% \todo[inline]{Lower bound for $SC$, s=2}

% \section{ToDo}

% \begin{itemize}
%     \item $k\times n$ . With $k=2, Ms(G)=0$
%     \item Erdos Selfridge like bounds
%     \item Triangular grids
% \end{itemize}

%%%%%%%%%%%%%%%%%%%%%%%%%%%%%%%%%%%%%%%%%%%%%%%%
\section{General results} \label{sec:general}

The first theorem is a rewriting of Erd\H{o}s-Selfridge Theorem, as done in~\cite[Theorem 10]{scoringPG}, adjusted to our notation. It yields bounds on the optimal score when $s\in \{1,k\}$. 

\begin{theorem}[Erd\H os-Selfridge Theorem,~\cite{scoringPG}]
\label{thm:ES}
Let $\mathcal \cG=(V,\mathcal F)$ be a $k$-uniform hypergraph with $n$ hyperedges, we have $\scr(\cG,k) \le \underset{e \in \mathcal{F}}{\sum} 2^{-|e|} = n2^{-k}$ and $\scr(\cG,1) \ge  n(1-2^{-k})$.
\end{theorem}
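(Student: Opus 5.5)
The plan is to prove both inequalities with the Erd\H{o}s--Selfridge potential method, as in \cite[Theorem 10]{scoringPG}, reading the statement in the convention of that theorem. I flag the convention explicitly: with Maker moving first, the first inequality cannot hold as literally stated. For $k=1$ and a single winning set $\{v\}$, Maker claims $v$ on her first move and scores $1 > 1\cdot 2^{-1}$. So the exact bound $n2^{-k}$ is to be proved for the game in which Breaker plays first, which is the setting of \cite{scoringPG}. With Maker moving first, the same argument applied from the position after her first move only gives an additive $O(\Delta(\cF))$ correction.

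\textbf{Inequality $\scr(\cG,k)\le n2^{-k}$.} At any moment call a winning set $e$ \emph{alive} if Breaker has no element of $e$, and give it weight $w(e)=2^{-u(e)}$, where $u(e)$ is the number of unclaimed elements of $e$. Let $\Phi=\sum_{e \text{ alive}} w(e)$. Initially $\Phi=n2^{-k}$. At the end of the game the alive sets are exactly those fully claimed by Maker, each of weight $1$, so $\Phi$ equals Maker's score. It therefore suffices to give Breaker a strategy under which $\Phi$ never increases over a Breaker move followed by a Maker move. Breaker's rule is to claim the unclaimed vertex $u$ maximising $W(u)=\sum_{e\ni u,\, e\text{ alive}} w(e)$.
\begin{itemize}
\item Breaker's move decreases $\Phi$ by exactly $W(u)$, since all alive sets through $u$ die.
\item Maker's subsequent move $v$ doubles the weight of every alive set through $v$. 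By the choice of $u$, we have $W(v)\le W(u)$ with respect to the weights before Breaker's move.
\item Sets containing both $u$ and $v$ are already dead when Maker plays, so they contribute nothing to her increase. Hence her increase is at most $W(v)\le W(u)$.
\end{itemize}
Each Breaker--Maker round therefore has nonpositive net change. If $|V|$ is odd, the last move is Breaker's, which only decreases $\Phi$.

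\textbf{Inequality $\scr(\cG,1)\ge n(1-2^{-k})$.} Here Maker plays first, which is the favourable direction. A winning set is bad exactly when Breaker claims all $k$ of its elements. Maker plays the role of the ``breaker'' in the previous argument with the roles swapped. Her potential is $\Psi=\sum 2^{-u(e)}$ over sets containing no Maker element, with Breaker now the one doubling weights. Maker's greedy move removes at least as much weight as Breaker's following move adds, by the same comparison as above. If $|V|$ is odd, Maker's extra final move only helps her. Hence the number of bad sets is at most $\Psi_0=n2^{-k}$, and the score is at least $n(1-2^{-k})$.

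\textbf{Main obstacle.} The delicate step is the comparison $W(v)\le W(u)$ within a round, in particular the treatment of sets containing both $u$ and $v$ and the bookkeeping of which weights (before or after Breaker's move) are used. Closely tied to this is getting the move order right. Pairing a Breaker move with the following Maker move is what makes the potential monotone. This pairing is also exactly why the exact bound $n2^{-k}$ requires Breaker to move first in the first inequality, while the second inequality holds unconditionally with Maker first.
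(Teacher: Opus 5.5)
Your proof is correct and is the standard Erd\H{o}s--Selfridge potential argument that the paper imports from \cite[Theorem 10]{scoringPG} without reproducing it; you also handle correctly the sets containing both Breaker's vertex $u$ and Maker's reply $v$ in a round. Your caveat about move order is also right. Under the paper's convention that Maker moves first, the bound $\scr(\cG,k)\le n2^{-k}$ holds only up to an additive $\Delta(\cF)2^{-k}$; for example, a single one-element winning set gives score $1>1/2$. This is harmless for the paper, since it only applies the theorem with $o(n)$ error terms.
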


Next, in Theorem~\ref{thm:upper_pairing}, we utilize a probabilistic analysis to prove the existence of a strategy for Breaker against a pairing strategy in any game that is uniform and regular. As it turns out, for several of the games that we consider, the score obtained in this way beats all currently known deterministic strategies. The applications of this theorem can be found in the last column of Table~\ref{tab:results}.

%\mSt{Should we promote Lemma~\ref{lem:upper_pairing} into a Theorem?}

\begin{theorem}\label{thm:upper_pairing}
Let $H$ be a hypergraph with $n$ winning sets such that each winning set contains $k$ vertices, each vertex belongs to $\ell$ winning sets and each pair of vertices belongs to at most $q$ winning sets. Let $s \in  [k]$. For all $i \in \{k-2\lfloor\frac{k}{2}\rfloor,\dots,k-2, k\}$, let $X_i$ be a random variable distributed according to the binomial distribution $\mathcal B (i,1/2)$, and let %$p_{s,i}$ be the probability 
$p_{s,i} := \mathbb P (\frac{k-i}{2}+X_i \geq s)$. 

Then $\scrp(H, s) \leq z\cdot n$, where $z$ is the optimal value of the following linear program
\begin{equation} \label{eq:LP}
\begin{split}
\text{maximize }  & \sum_i p_{s,i} \cdot m_i, \\
\text{subject to } & \sum_i m_i = 1,\\
                 & \sum_i i\cdot m_i \geq \frac{k (\ell-q)}{\ell},
\end{split}
\end{equation}
with variables $m_i\geq 0$, $i \in \{k-2\lfloor\frac{k}{2}\rfloor,\dots,k-2, k\}$.
\end{theorem}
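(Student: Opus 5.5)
Fix an arbitrary pairing strategy $\mathcal P$ for Maker; we exhibit a random Breaker strategy against it whose expected score is at most $zn+O(1)$. Some strategy then achieves at most this score, which bounds $\scrp(H,s)$.

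\emph{Breaker's strategy.} For each pair $\{u,v\}$ of $\mathcal P$, Breaker flips an independent fair coin in advance and decides which of $u,v$he will take. Whenever it is his turn, he claims the designated vertex of some pair that is still completely unclaimed. Maker follows $\mathcal P$: whenever Breaker takes a vertex of a pair, Maker answers with its partner. Hence every pair is split between the players, and the pair's coin decides who gets which vertex. Maker's first move, and any move she makes when no answer is forced, may break this. We repair it as in the usual strategy-stealing arguments: Breaker treats an element taken by Maker out of order as if he had chosen that pair, and takes the partner. Pairing strategies may also leave some vertices unpaired. We handle these by letting Maker take them all, which only helps her, so the bound remains an upper bound. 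The deviations affect $O(1)$ moves, and each vertex lies in only $\ell$ winning sets, so they cost at most $O(\ell)$, which is $o(n)$. With this convention, each pair is split by an independent fair coin.

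\emph{Counting.} For a winning set $F$, let $j_F$ be the number of pairs of $\mathcal P$ entirely contained in $F$, and let $i_F=k-2j_F$. Maker gets exactly one vertex from each of the $j_F$ inner pairs. Each of the remaining $i_F$ vertices of $F$ belongs to a distinct pair, and Maker gets that vertex with probability $1/2$, independently of the others. So Maker's count in $F$ is $\frac{k-i_F}{2}+X_{i_F}$ with $X_{i_F}\sim\mathcal B(i_F,1/2)$. Then $F$ is good with probability exactly $p_{s,i_F}$, and $i_F\in\{k-2\lfloor k/2\rfloor,\dots,k\}$ with the right parity. Let $m_i$ be the fraction of winning sets with $i_F=i$. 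The expected score is $n\sum_i p_{s,i}m_i$, and $\sum_i m_i=1$.

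\emph{The constraint.} The main step is $\sum_i i\,m_i\ge k(\ell-q)/\ell$, equivalently $\sum_F j_F\le nkq/(2\ell)$. Double count pairs $(P,F)$ with $P\in\mathcal P$ and $P\subseteq F$. Each pair lies in at most $q$ winning sets, and there are at most $|V|/2$ pairs. Counting incidences gives $|V|\ell=nk$, so $|V|=nk/\ell$. Hence $\sum_F j_F\le q\,nk/(2\ell)$, and dividing by $n$ gives $\sum_i (k-i)m_i/2\le kq/(2\ell)$, which is the constraint. The vector $(m_i)$ is therefore feasible for (\ref{eq:LP}), so the expected score is at most $zn$, plus the $o(n)$ correction from the deviation handling. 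Since $\mathcal P$ was arbitrary, $\scrp(H,s)\le zn$ in the paper's $o(n)$ convention.

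I expect the main obstacle to be the rigorous bookkeeping around Maker's first move and unpaired vertices, so that the split of each pair really is uniform and independent, since the LP computation itself is routine.
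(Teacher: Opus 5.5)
Your core argument is the paper's. You fix the pairing and let Breaker split every pair by an independent fair coin. You observe that a winning set with $j_F$ internal pairs receives $\frac{k-i_F}{2}+X_{i_F}$ Maker vertices, where $i_F=k-2j_F$. You get the constraint by double counting and finish by averaging. Your count of pairs inside winning sets is the same count as the paper's count of vertex--set incidences with the partner outside the set. That part is correct.

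The paper simply asserts that Breaker decides every split and never mentions Maker's opening move. You are right that an additive correction is needed: for $k=1$, $\ell=1$, $q=0$ and $n$ odd, $z=1/2$, yet Maker gets $(n+1)/2$ sets. However, the repair you propose does not work as written.

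The gap is in the repair itself. You let Breaker answer an unforced Maker vertex $v$ by immediately claiming its partner $u$. But Maker's pairing reply to $u$ is $v$, which she already owns, so her next move is again unforced. Breaker again takes the partner, and so on. Under your rule every Maker move is unforced and she chooses which vertex of \emph{every} pair she gets. So the claim that the deviations affect $O(1)$ moves is false, and no pair is actually split by a coin.

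The fix is to postpone. Breaker keeps claiming the coin-designated vertex of some completely unclaimed pair; each such move forces Maker to reply with its partner, which is still unclaimed. He claims the partner of Maker's opening vertex only when no untouched pair remains. Then only the opening pair, plus at most one unpaired vertex when $|V|$ is odd, escapes the coins. That changes the expected score by at most $3\ell$.

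Relatedly, ``letting Maker take all unpaired vertices'' contradicts your counting step, where every vertex of $F$ outside internal pairs is Maker's with probability exactly $1/2$. It is harmless only because a pairing strategy pairs all but at most one vertex.

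Finally, $O(\ell)=o(n)$ needs $\ell=o(n)$. That holds on the grids but not for arbitrary $H$, so the honest general conclusion is $\scrp(H,s)\le zn+O(\ell)$.
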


\begin{proof}
Since each winning set has $k$ vertices and each vertex belongs to $\ell$ winning sets, there are $\frac{k\cdot n}{\ell}$ vertices.

Let $\cP$ be a pairing of $H$. We denote by $M_i$ the set of winning sets that contain exactly $i$ vertices not paired within the winning set, where $i \in \{k-2\lfloor\frac{k}{2}\rfloor,\dots,k-2, k\}$. Let $m_i = \frac{|M_i|}{n}$. We have $\sum_i m_i = 1$. Moreover, for each pair $(u,v)$ of $\cP$, there are at most $q$ winning sets containing both $u$ and $v$. Therefore, for each vertex $u$, it is in at least $\ell -q$ winning sets not containing its paired vertex. This further implies $\sum_i |M_i| \geq (\ell -q)\cdot |V| = (\ell -q)\frac{k\cdot n}{\ell}$, and $\sum_i m_i \geq (\ell -q)\frac{k}{\ell}$.

Playing the $s$-out-of-$k$ game, if Maker plays according to the pairing $\cP$, then Breaker decides which of the two vertices in each pair Maker claims. Suppose that Breaker decides that by flipping a fair coin, independently for each pair. Then, for each $i$ and for each winning set $e\in M_i$, the number of vertices played by Maker in the winning set is $\frac{k-i}{2}+X_i$, where $X_i$ is a random variable following the binomial distribution $\mathcal B (i,1/2)$. Therefore, the probability that Maker scores the winning set $e$ is $p_{s,i}$, and the expected total number of winning sets won by Maker at the end of the game is $\sum_i (n\cdot m_i\cdot p_{s,i}) = n \sum_i (m_i\cdot p_{s,i})$.  

Since this probabilistic strategy of Breaker has the above mentioned expected Maker's score, there has to exist a deterministic strategy for Breaker allowing at most the same score. Hence, this number is an upper bound for the score of Maker for the given values of $m_i$, and Maker cannot score a better fraction of the total number of winning sets than the optimal value of the linear program~(\ref{eq:LP}).
\end{proof}

This lemma allows us to obtain the results presented in Table~\ref{tab:upper_bounds}, yielding upper bounds for the values of $\scrp$.

\begin{table}[ht]
    \centering
{\renewcommand{\arraystretch}{1.2}
\begin{tabular}{|c|c|l|c|}\hline     
    \multicolumn{2}{|c|}{Problem}& Function to be maximized & Bound  \\ \hline
    \multirow{3}{*}{Triangular grid} & $s =1$ & $m_1 + \frac{7}{8}m_3$ & $\scrp(\cG_\tria, 1) \leq \frac{15}{16}n$\\ \cline{2-4}
    & $s =2$ & $\frac{1}{2}m_1 + \frac{1}{2}m_3$ & $\scrp(\cG_\tria, 2) \leq \frac{1}{2}n$\\ \cline{2-4}
    & $s =3$ & $\frac{1}{8}m_3$ & $\scrp(\cG_\tria, 3) \leq \frac{1}{8}n$\\ \hline
    \multirow{4}{*}{Square grid} & $s =1$ & $m_0 + m_2 + \frac{15}{16}m_4$ & $\scrp(\cG_\squa, 1) \leq n$\\ \cline{2-4}
    & $s =2$ & $m_0+ \frac{3}{4}m_2 + \frac{1}{2}m_4$ & $\scrp(\cG_\squa, 2) \leq \frac{3}{4}n$\\ \cline{2-4}
    & $s =3$ & $\frac{1}{4}m_2 + \frac{5}{16}m_4$ & $\scrp(\cG_\squa, 3) \leq \frac{5}{16}n$\\ \cline{2-4}
    & $s =4$ & $\frac{1}{16}m_4$ & $\scrp(\cG_\squa, 4) \leq \frac{1}{16}n$\\ \hline
    \multirow{4}{*}{Rhombus grid} & $s =1$ & $m_0 + m_2 + \frac{15}{16}m_4$ & $\scrp(\cG_\rhom, 1) \leq \frac{23}{24}n$\\ \cline{2-4}
    & $s =2$ & $m_0+ \frac{3}{4}m_2 + \frac{1}{2}m_4$ & $\scrp(\cG_\rhom, 2) \leq \frac{71}{96}n$\\ \cline{2-4}
    & $s =3$ & $\frac{1}{4}m_2 + \frac{5}{16}m_4$ & $\scrp(\cG_\rhom, 3) \leq \frac{5}{16}n$\\ \cline{2-4}
    & $s =4$ & $\frac{1}{16}m_4$ & $\scrp(\cG_\rhom, 4) \leq \frac{1}{16}n$\\ \hline
    \multirow{6}{*}{Hexagonal grid} & $s =1$ & $m_0 + m_2 + m_4 + \frac{63}{64}m_6$ & $\scrp(\cG_\hexa, 1) \leq n$\\ \cline{2-4}
    & $s =2$ & $m_0 + m_2 + \frac{15}{16}m_4 + \frac{57}{64}m_6$ & $\scrp(\cG_\hexa, 2) \leq n$\\ \cline{2-4}
    & $s =3$ & $m_0 + \frac{3}{4}m_2 + \frac{11}{16}m_4 + \frac{21}{32}m_6$ & $\scrp(\cG_\hexa, 3) \leq \frac{85}{96}n$\\ \cline{2-4}
    & $s =4$ & $\frac{1}{4}m_2 + \frac{5}{16}m_4 + \frac{11}{32}m_6$ & $\scrp(\cG_\hexa, 4) \leq \frac{11}{32}n$\\ \cline{2-4}
    & $s =5$ & $\frac{1}{16}m_4 + \frac{7}{64}m_6$ & $\scrp(\cG_\hexa, 5) \leq \frac{7}{64}n$\\ \cline{2-4}
    & $s =6$ & $\frac{1}{64}m_6$ & $\scrp(\cG_\hexa, 6) \leq \frac{1}{64}n$\\ \hline
\end{tabular}
}
    \caption{Upper bounds obtained with Theorem~\ref{thm:upper_pairing}.}
    \label{tab:upper_bounds}
\end{table}

%\todo[inline]{Valentin: "rhombus grid"? Also there's maybe something to say about the symmetry between the rhomus and the square cases}

%These results are upper bounds and as we will see in the following sections, the values of $\scrp(\cG_\squa, 4)$, $\scrp(\cG_\hexa, 5)$ and $\scrp(\cG_\hexa, 6)$ are actually equal to $0$. 

To wrap up this section, the following lemma considers the particular case of $k$ odd and $s$ equal to half this value. Under such conditions, the roles of Maker and Breaker are essentially the same, since each of them aims at claiming exactly half the size of a winning set in as many winning sets as possible. This allows for a double application of a standard strategy stealing argument, to obtain tight bounds on $\scr$.

\begin{lemma}\label{lem:kodd}
Let $k$ be an odd integer and $s=(k+1)/2$. Let $n$ be the number of winning sets in a $k$-uniform hypergraph $\cG$ with $o(n)$ maximum degree. Then we have  $\scr(\cG,s)=n/2+o(n)$.
\end{lemma}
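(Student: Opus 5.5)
Since $k$ is odd and $s=(k+1)/2$, at the end of the game every winning set contains either at least $s$ Maker elements or at least $k-s+1=(k+1)/2=s$ Breaker elements, and never both. So a winning set is good for Maker exactly when it is bad for Breaker, and conversely. Observation~\ref{obs:1} then becomes a symmetry statement: the $s$-of-$k$ game is the same game for both players, with the same threshold $s$, except that Maker moves first. The plan is to show $\scr(\cG,s)\ge n/2-o(n)$ and $\scr(\cG,s)\le n/2+o(n)$ by two strategy-stealing arguments, one for each player.

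\textbf{Lower bound.} I will show that Maker can secure $n/2-o(n)$ good sets. Let $B$ be the number of sets in which Breaker, playing second and optimally, can guarantee at least $s$ of his own elements. By the dichotomy this equals $n-\scr(\cG,s)$. Maker steals a second-player strategy $S$ that guarantees $B$ such sets. She first claims an arbitrary vertex $x$, then pretends to be the second player and follows $S$. Whenever $S$ tells her to claim an element she already owns, she claims another arbitrary free element instead. Owning extra elements never hurts a player, since having at least $s$ of one's own elements in a set is monotone. Hence Maker achieves at least $B$ good sets, i.e.\ $\scr(\cG,s)\ge B = n-\scr(\cG,s)$, which gives $\scr(\cG,s)\ge n/2$. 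I will check that this works exactly, and that no $o(n)$ loss is needed here.

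\textbf{Upper bound.} Symmetrically, Breaker can imitate a first-player strategy. Let $T$ be Maker's optimal strategy securing $\scr(\cG,s)$ good sets; the same argument applied to Breaker gives him $T$-many sets with at least $s$ Breaker elements. Breaker pretends Maker's first move $x_1$ has not happened, and treats his own first move as the opening move of $T$. He then follows $T$, with Maker's moves as the responses, and handles collisions with arbitrary extra moves as before. The only defect is the element $x_1$, which Breaker's imagined game assigns to no one but which Maker actually owns. That element lies in at most $\Delta(\cF)=o(n)$ winning sets, so at most that many of Breaker's imagined successes can be spoiled. The cleaner route is to let Breaker imagine that $x_1$ belongs to his imagined opponent, so that the imagined game is legitimate. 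Therefore $n-\scr(\cG,s)\ge \scr(\cG,s)-o(n)$, i.e.\ $\scr(\cG,s)\le n/2+o(n)$.

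The main obstacle is making this second stealing argument rigorous. There is a parity mismatch: in the real game Breaker is second, yet he plays a first-player strategy. One must also handle the case where $T$ later asks Breaker to claim $x_1$, which Maker already owns. I would handle it by replacing such a move with an arbitrary free element and charging the damage to the $\le\Delta(\cF)$ sets containing $x_1$, plus possibly one extra vertex at the end of the game if $|V|$ has the wrong parity. Every discrepancy between the imagined game and the real one is confined to sets through a constant number of vertices, so the total loss is $O(\Delta(\cF))=o(n)$.
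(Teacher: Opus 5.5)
Your proof is correct and follows the paper's argument. Maker steals Breaker's second-player strategy to get $\scr(\cG,s)\ge n/2$ with no loss, and Breaker steals Maker's first-player strategy at a cost of at most $\Delta(\cF)=o(n)$ sets through a single discrepancy vertex; the paper has Breaker imagine a phantom opening move of his own, while you have him ignore Maker's $x_1$, and both versions work.

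Drop your ``cleaner route'' of imagining $x_1$ as belonging to the opponent from the start. That is not a legitimate play from the empty board, so the guarantee of the stolen strategy does not apply, unless $x_1$ is read as the reply to such a phantom opening move.
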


\begin{proof}
From Observation~\ref{obs:1}, the roles of both Maker and Breaker are to score as many winning sets having at least $s$ vertices. 
If we assume, for a contradiction, that Breaker can achieve more than $n/2$ winning sets with at least $s$ vertices in them, then Maker can steal this strategy and employ it as first, ignoring her first move. That way she scores more than $n/2$ winning sets, a contradiction. 

On the other hand, Breaker, by playing second, can imagine he is the first player and steal the strategy of Maker. For his first move, he imagines that he plays it, even though he does not. With such a strategy, Breaker loses at most the winning sets containing that imagined first move, i.e., an $o(n)$ value. 

Hence, asymptotically, each player can score $n/2+o(n)$ winning sets.
\end{proof}

\section{Triangular grid, triangle, $k=3$}\label{sec:triangle}

\subsection{Triangle with $s=1$}

%$\blacktriangleright$ State of the art: $3n/4 \leq \scrp(\cG_\tria, 1) \leq 15n/16 $, $ 7n/8 \leq \scr(\cG_\tria, 1) \leq 27n/28$.

From Theorem~\ref{thm:upper_pairing} we get $\scrp(\cG_\tria, 1) \leq 15n/16 + o(n)$, from Theorem~\ref{thm:Maker-s3} we have $\scr(\cG_\tria, 1) \leq 27n/28 + o(n)$, and from Erd\H{o}s-Selfridge Theorem we get $\scr(\cG_\tria, 1) \ge 7n/8 + o(n)$.

\begin{theorem}\label{thm:scrp-trig-s1}
$3n/4 \leq \scrp(\cG_\tria, 1)$.
\end{theorem}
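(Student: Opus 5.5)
The plan is to give an explicit pairing for Maker and to show that, whatever Breaker does against it, at most $n/4+o(n)$ triangles end up with no Maker vertex.

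\textbf{Setup and pairing.} I would draw the triangular grid with one family of edges horizontal, so the vertices lie in horizontal rows. Every elementary triangle, upward or downward, has exactly one horizontal edge. Conversely, every horizontal edge away from the boundary lies in exactly one upward and one downward triangle. So the triangles correspond, two to one, to horizontal edges, and $n = 2\cdot\#\{\text{horizontal edges}\}+o(n)$. In each row, indexing the vertices $0,1,2,\dots$ from left to right, Maker pairs $2i$ with $2i+1$; vertices left over at row ends contribute only $o(n)$. The horizontal edges of each row then alternate between \emph{paired} edges $\{2i,2i+1\}$ and \emph{gap} edges $\{2i+1,2i+2\}$. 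Hence half the horizontal edges are paired and half are gaps, up to $o(n)$.

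\textbf{Triangles on paired edges.} Any triangle whose horizontal edge is a paired edge contains both vertices of a pair. Maker therefore always owns one of its vertices, so it is good. These triangles number $n/2+o(n)$.

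\textbf{Triangles on gap edges.} Breaker decides, pair by pair, which vertex Maker receives; call the choice $L$ or $R$ for the pair $\{2i,2i+1\}$. A gap edge $\{2i+1,2i+2\}$ has both endpoints owned by Breaker only if pair $i$ is $L$ and pair $i+1$ is $R$. If a gap edge does not have both endpoints with Breaker, its two triangles are good whatever happens at their third vertices. Two consecutive gaps in a row, between pairs $i,i+1$ and between $i+1,i+2$, cannot both have the pattern $(L,R)$, since that would require pair $i+1$ to be both $R$ and $L$. So in each row at most half of the gap edges, up to $+1$, are fully Breaker's. Each such gap edge can make at most its two triangles bad. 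The number of bad triangles is therefore at most $2\cdot\frac12\cdot\frac{n}{4}+o(n)=n/4+o(n)$, which gives $\scrp(\cG_\tria,1)\ge 3n/4-o(n)$.

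\textbf{Expected difficulty.} The key step is the local counting argument on each row, the no-two-consecutive-$(L,R)$ observation. Everything else is bookkeeping: checking that each horizontal edge lies in exactly two triangles and controlling the boundary and odd-row-length effects, which are absorbed into the $o(n)$ term as agreed in the paper.
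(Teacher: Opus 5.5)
Your proof is correct and follows essentially the paper's route. The paper also pairs vertices along grid edges, so the $n/2$ triangles containing a paired side are automatically touched, and then uses a local argument to show at least half of the remaining triangles are touched. The only difference is the bookkeeping in that last step: the paper partitions the leftover triangles into blocks of four around two pairs, each with at least two touched, whereas your observation that no two consecutive gap edges in a row can both be fully Breaker's gives the same count.
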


\begin{proof}
Maker pairs the vertices along the edges of the grid marked green in Figure~\ref{fig:matching}. Clearly, she will touch all the triangles marked red, as each of them has paired vertices on one of its sides.
\begin{figure}[htb]
    \centering
    \includegraphics[scale=2]{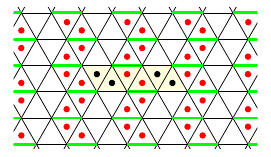}
    \caption{Pairing for Maker, triangle game with $s=1$.}
    \label{fig:matching}
\end{figure}

Further more, in the block marked yellow there are four more triangles, marked black. No matter which two vertices of the two (dark) green pairs within the block Maker claims, at least two out of the four black triangles will be touched. As all the remaining triangles can be partitioned into such groups of four (like four triangles marked black), Maker will touch at least half of the remaining triangles.

Altogether, Maker will touch at least $n/2 + 1/2\cdot n/2 = 3n/4$ triangles.
\end{proof}

\subsection{Triangle with $s=2$}

%\todo[inline]{Change it by using lemma 5}

%$\blacktriangleright$ State of the art:  $3n/8 \leq \scrp(\cG_\tria, 2) \leq n/2$,  $\scr(\cG_\tria, 2) \leq 5n/8$.

From Lemma~\ref{lem:kodd}, we get $\scr(\cG_\tria, 2) = n/2 + o(n)$.

\begin{theorem}\label{thm:scrp-trig-s2}
$3n/8\leq \scrp(\cG_\tria, 2)$.
\end{theorem}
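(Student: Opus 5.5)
We construct an explicit pairing for Maker on the triangular grid and then count, over one period of the pairing, how many triangles are guaranteed to receive at least two Maker vertices. The count is worst-case over Breaker's choices. In a pairing strategy Maker gets exactly one vertex of every pair, so a triangle is \emph{good} for sure if it contains a full pair (one Maker vertex) plus one more vertex that Maker is guaranteed to get. That cannot happen with a single pair inside a triangle of size $3$. So the guaranteed triangles must come from \emph{correlations}: two triangles sharing a paired edge, or groups of triangles whose vertices lie in a few pairs, where any choice by Breaker makes a fixed fraction of the group good. This mirrors the argument of Theorem~\ref{thm:scrp-trig-s1}.

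\textbf{The pairing.} Pair vertices along grid edges, using parallel horizontal edges, say. Consider a horizontal row of vertices $\dots, v_1, v_2, v_3, v_4,\dots$ with pairs $\{v_1,v_2\},\{v_3,v_4\},\dots$. Each paired edge $\{v_1,v_2\}$ lies in two triangles, one above and one below, each with a third vertex from an adjacent row. Shift the pairings of consecutive rows so that every triangle contains exactly one paired edge; this is possible when pairs cover alternate horizontal edges and adjacent rows are offset. Then each triangle consists of a pair $P$ (one Maker vertex) and an apex vertex $w$ belonging to some other pair $Q$. That triangle is good iff Maker gets $w$.

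\textbf{Counting.} For each pair $Q=\{w,w'\}$, Maker gets exactly one of $w,w'$. So the triangles with apex $w$ and those with apex $w'$ compete. Assign each pair the triangles having one of its vertices as apex; whichever vertex Breaker leaves to Maker, she wins the triangles apexed there. If the layout makes the apex-loads of $w$ and $w'$ unequal, Breaker takes the heavier side, so balance matters. A naive layout gives $1/2$ of the triangles in which a single paired edge occurs, but some triangles contain no paired edge at all; with $n/2+n/2\cdot$(fraction) accounting as in Theorem~\ref{thm:scrp-trig-s1}, one aims for $3/4$ of the triangles being ``pair$+$apex'' type with balanced apexes, giving $\tfrac12\cdot\tfrac34 n=3n/8$.

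\textbf{Plan.} I would fix a periodic pattern, probably the same green matching as in Figure~\ref{fig:matching} or a close variant. Then:
\begin{enumerate}
\item Classify the triangles in a fundamental block into those containing a pair and those not.
\item Verify that for each pair, the triangles whose apex is one of its two vertices split evenly between the two.
\item Sum to get at least $3/8$ of the triangles per block, with boundary effects absorbed into $o(n)$.
\end{enumerate}

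\textbf{Main obstacle.} The hard part is finding the layout that achieves exactly this balance. Triangles containing no pair can still become good if two of their vertices come from different pairs and Breaker's choices are forced. Using these needs a finer case analysis on a block of a few pairs, as with the yellow block in the $s=1$ proof. I would first try the matching of Figure~\ref{fig:matching} directly and enumerate all $2^m$ Breaker choices in one block of $m$ pairs.
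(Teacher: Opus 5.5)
\textbf{There is a genuine gap: your counting target is infeasible, and the argument that must supply the missing $n/8$ is left open.} In the triangular grid every vertex lies in six triangles, so $|V|=n/2$ and any pairing has $n/4$ pairs. A pair lies in a triangle only if it is a grid edge, and then it lies in exactly two triangles; no triangle contains two pairs. Hence at most $n/2$ triangles contain a pair. Your claim that offsetting the rows makes every triangle contain a paired edge is therefore false, and so is the target of making $3/4$ of the triangles of pair$+$apex type. Moreover, on the pair$+$apex triangles Breaker can always hand Maker the less loaded apex of each pair, so this mechanism alone can be held to at most $n/4$. The remaining $n/8$ must come from triangles containing no pair at all, which is exactly what you leave open as the ``main obstacle''.

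\textbf{The missing idea is a global averaging argument for the pair-free triangles, not a correlation or case analysis.} The paper arranges the pairs so that the pair-containing triangles fill every other horizontal strip. Each contains exactly one pair, and they can be matched so that the free vertices of two matched triangles form a pair; this gives $n/4$ good ones. The pair-free triangles fill the strips between the paired edges, and every vertex lies in exactly three of them. Since Maker owns exactly half of the $n/2$ vertices, her vertices have exactly $3n/4$ incidences with the $n/2$ pair-free triangles, whatever Breaker does. A good triangle absorbs at most $3$ of these incidences and a bad one at most $1$. So if $g$ of these triangles are good, then $3g+(n/2-g)\ge 3n/4$, i.e.\ $g\ge n/8$, and the total is $n/4+n/8=3n/8$.

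\textbf{Your horizontal layout can be completed the same way.} Take alternate horizontal edges in every row, with rows at distance two using the same phase. Then every vertex is the apex of exactly one pair-containing triangle, so exactly $n/4$ of those are good. Every vertex also lies in exactly three pair-free triangles, so the same averaging yields the remaining $n/8$.
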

\begin{proof}
\begin{figure}[ht]
    \centering
    \begin{minipage}[c]{0.5\linewidth}
        \includegraphics[width = 8cm]{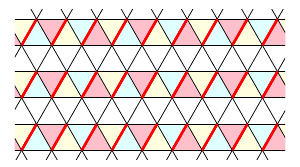}
    \end{minipage}
    \qquad
    \begin{minipage}[c]{0.4\linewidth}
        \includegraphics[width = 6cm]{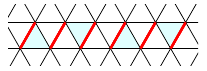}
    \end{minipage}
    \caption{The pairing for the triangular grid and $s=2$, and two pairs of matched triangles of the same color.}
    \label{fig:triangle-pairing-s2}
\end{figure}

Maker pairs the vertices as depicted in red in Figure~\ref{fig:triangle-pairing-s2} on the left. Among the triangles with one side paired that are colored with the same color, either yellow, red or blue, we are guaranteed to score at least half of them. Indeed, there is a matching among the triangles of each respective color, such that two matched triangles have two of their vertices paired, while the third one is paired with a vertex from the matched triangle, see Figure~\ref{fig:triangle-pairing-s2} on the right. 

On the remaining triangles (the uncolored triangles, lying on the strips between the paired edges), knowing that the total average of Maker's vertices per triangle is $1.5$, we can establish a lower bound for the score assuming the worst-case distribution -- all scored triangles have 3 Maker's vertices, and the rest have 1 Maker's vertex. 
Hence, the total score must be at least $n/2\cdot 1/2 + n/2\cdot 1/4 = 3n/8$.
\end{proof}

\subsection{Triangle with $s=3$}

%$\blacktriangleright$ State of the art: $n/28 \leq \scrp(\cG_\tria, 3) \leq \scr(\cG_\tria, 3) \leq n/8$. \\

From Theorem~\ref{thm:upper_pairing} we get $\scrp(\cG_\tria, 3) \leq n/8 + o(n)$ and from Erd\H{o}s-Selfridge Theorem, we get $\scr(\cG_\tria, 3) \leq n/8 + o(n)$.

% \begin{lemma} \label{lem:matching}
% $\scr(\cG_\tria, 3) \leq n/4$.
% \end{lemma}

% \begin{theorem} \label{thm:Maker-s3}
% $n/48 \leq \scrp(\cG_\tria, 3)$
% \end{theorem}
% \begin{proof}
% Maker applies the pairing in Figure~\ref{fig:triangle-pairing-s3}, where the pairs are denoted by both red and green lines. In the horizontal sequence of grid vertices that are paired by green lines there must be two consecutive Maker's vertices in every eight consecutive segments. Red pairing then ensures that one of the two triangles on that pair of Maker's vertices will be completely Maker's. 

% That esures one Maker's triangle in every 48 triangles.
% \begin{figure}[htb]
%     \centering
%     \includegraphics[scale=0.7]{figures/triangle-s3-pairing-lb.png}
%      \caption{Pairing for Maker, triangle game with $s=3$}
%     \label{fig:triangle-pairing-s3}
% \end{figure}
% \end{proof}

\begin{theorem} \label{thm:Maker-s3}
$n/28\leq \scrp(\cG_\tria, 3)$.
\end{theorem}
\begin{proof}
Maker applies the pairing in Figure~\ref{fig:triangle-pairing-s3} where the hexagon centers, marked red, are arbitrarily paired among themselves, and other than that each hexagon has the same pairing on the remaining vertices.
\begin{figure}[htb]
    \centering
    \includegraphics[scale=1.7]{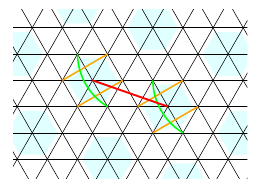}
     \caption{Pairing for Maker, triangle game with $s=3$.}
    \label{fig:triangle-pairing-s3}
\end{figure}

First, note that out of every two paired red vertices Maker will claim one. Let us concentrate on the hexagon around that vertex. Out of the two paired green vertices Maker will claim one, resulting in two neighboring Maker's vertices, one red and one green. Finally, out of two paired orange vertices that can complete a Maker's triangle Maker will claim one, thus claiming a triangle.

Hence, we have one Maker's triangle per two hexagons.
\end{proof}

%%%%%%%%%%%%%%%%%%%%%%%%%%%%%%%%%%%%%%%%%%%%%
\section{Square grid, square, $k=4$}\label{sec:square}

\subsection{Square with $s\in\{1,4\}$}

\begin{observation}
\label{obs:square-s14}
$\scr(\cG_\squa, 1) = \scrp(\cG_\squa, 1) = n$ and $\scr(\cG_\squa, 4) = \scrp(\cG_\squa, 4) = 0$.
\end{observation}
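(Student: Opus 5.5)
For $s=1$, the plan is a pairing strategy for Maker that hits every unit square. Pair the vertices of the square grid along horizontal edges in a brick-like fashion. In each row, pair vertex $(2j,i)$ with $(2j+1,i)$, which is the same parity pattern in every row. Every unit square has its bottom (and top) side at columns $\{x,x+1\}$. When $x$ is even, that side is a pair, so Maker claims one of its vertices. When $x$ is odd, the square's columns $x$ and $x+1$ belong to two different pairs, and that does not immediately help.

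So I would instead shift the pattern in alternate rows, so that row $i$ uses pairs $\{2j+(i \bmod 2),\,2j+1+(i\bmod 2)\}$. A square with bottom-left corner $(x,y)$ has bottom side $\{x,x+1\}$ in row $y$ and top side $\{x,x+1\}$ in row $y+1$. Exactly one of these two rows pairs $x$ with $x+1$, depending on parity. Hence every square contains a whole pair, so Maker, answering each Breaker move with its partner and playing arbitrarily otherwise, claims a vertex of every square. This gives $\scrp(\cG_\squa,1)\ge n - o(n)$, where boundary vertices are paired arbitrarily or left unpaired. Observation~\ref{obs:2} together with the trivial bound $\scr\le n$ then yields the equality.

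For $s=4$, the plan is a symmetric pairing argument for Breaker, who can also use a pairing. With the same alternate-row pairing, every square contains a full pair. Breaker answers each Maker move with its partner, and if Maker takes an unpaired vertex or the partner is already taken, Breaker plays arbitrarily. Then Breaker gets one vertex of every pair, and in particular one vertex of every square. No square is fully Maker's, so $\scr(\cG_\squa,4)=0$ up to boundary effects, and $\scrp\le\scr$ gives the rest.

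The main point to check is that Maker moves first, so Breaker's response to each paired vertex Maker takes is always available. This holds because Breaker only responds within the pair that Maker just entered. The remaining care is at the boundary, where squares might lack a complete pair; I would handle this by the $o(n)$ convention, or by choosing the pairing on a finite rectangle so that every square still contains a pair. Since each square's two horizontal sides lie in consecutive rows, this works on any rectangle with even width adjustments, up to $o(n)$.
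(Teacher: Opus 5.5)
Your proof is correct and is essentially the paper's: your alternating-offset pairing along horizontal edges is the paper's checkerboard pairing along vertical edges (white squares paired on their left side, black squares on their right side) rotated by $90^\circ$, and your direct Breaker pairing argument for $s=4$ is exactly what the paper's appeal to Observation~\ref{obs:1} amounts to, without the $O(\Delta)$ slack. Your boundary caveat is also unnecessary: on any rectangle every square still contains a complete pair of your pairing (only some vertices in the first and last columns stay unpaired), so the equalities hold exactly rather than up to $o(n)$.
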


\begin{proof}
Using Observation~\ref{obs:1}, it is sufficient to provide a pairing for $s=1$ such that every square of the grid contains a pair. The pairing is the following: we color the grid squares black/white in a checkerboard fashion; every white square has paired vertices on its left edge, and every black square has paired vertices on its right edge.
\end{proof}

\subsection{Square with $s=2$}

%$\blacktriangleright$ State of the art: $2n/3 \leq \scrp(\cG_\squa, 2) \leq 3n/4$, $\scr(\cG_\squa, 2)\leq 13n/15$.

From Theorem~\ref{thm:upper_pairing} we get $\scrp(\cG_\squa, 2)\leq 3n/4 + o(n)$ and from Theorem~\ref{thm:SCsquares=3} we get $\scr(\cG_\squa, 2)\leq 13n/15 + o(n)$.

\begin{theorem} \label{thm:square-s2}
$\scrp(\cG_\squa, 2) \geq 2n/3$.
\end{theorem}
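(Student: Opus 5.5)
The plan is to give an explicit periodic pairing of the vertices of the square grid, verify it square by square, and ignore boundary effects, which cost only $o(n)$. In an $s$-of-$k$ game with $s=2$, a pairing guarantees Maker a square in two ways:
\begin{enumerate}[(a)]
\item the square contains two of the pairs entirely, on two opposite sides;
\item two squares $A,B$ are \emph{matched}: each contains one pair entirely, and its two other vertices $u,v$ (in $A$) and $u',v'$ (in $B$) are paired as $u$--$u'$ and $v$--$v'$, so at least one of $A,B$ gets two Maker vertices.
\end{enumerate}
This is the same matched-pair device used in the proof of Theorem~\ref{thm:scrp-trig-s2}. The counting must use both mechanisms. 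Each pair lies on one edge, and each edge belongs to two squares. Since there are about $n$ vertices, there are about $n/2$ pairs, so mechanism (a) alone can secure at most about $n/2$ squares.

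The concrete candidate is a pairing periodic with period $3$ horizontally and $2$ vertically.
\begin{itemize}
\item Columns $3i$ and $3i+1$ are joined by horizontal pairs $(3i,y)$--$(3i+1,y)$ for every $y$.
\item Column $3i+2$ is cut into vertical dominoes $(3i+2,2y)$--$(3i+2,2y+1)$.
\end{itemize}
The argument then runs through the vertical strips between consecutive columns.
\begin{itemize}
\item Every square in a strip $[3i,3i+1]$ contains two horizontal pairs, so by (a) this whole third of the squares scores.
\item In strips $[3i+1,3i+2]$ and $[3i+2,3i+3]$, the squares on rows $[2y,2y+1]$ contain a vertical pair. The two such squares flanking a horizontal-pair block on the same rows are matched as in (b), so at least half of them score.
\end{itemize}
This already gives about $n/3+n/6=n/2$. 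The remaining squares, on rows $[2y+1,2y+2]$ of the two non-paired strips, must contribute the missing $n/6$.

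The main obstacle is exactly these leftover squares. If the candidate pairing cannot handle them, I would modify it: shift the vertical dominoes in column $3i+2$ by one row in alternate periods, or alternate between horizontal and vertical blocks in a brick pattern. The goal is that every square either satisfies (a) or is matched as in (b). A matched pair would then be needed only for one square out of every three, and at least half of each matched pair scores.

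If this exhaustive structure cannot be achieved, the fallback is the averaging argument from Theorem~\ref{thm:scrp-trig-s2}. Maker claims exactly half the vertices, and each vertex lies in $4$ squares, so the average number of Maker vertices per square is $2$. The argument would then go as follows.
\begin{enumerate}
\item Bound the number of Maker vertices in the squares already guaranteed by the pairing's structure.
\item Note that an unscored square carries at most one Maker vertex.
\item Derive a lower bound on the number of scored squares among the rest.
\end{enumerate}
The step I expect to be hardest is tuning this fallback. A scored square can carry up to $4$ Maker vertices, so one must show that squares with $3$ or $4$ Maker vertices cannot be too common in the leftover region. Only then does the count reach $2n/3$.
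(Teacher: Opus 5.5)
Your argument proves only $n/3+n/6=n/2$: the strip squares $[3i,3i+1]\times[r,r+1]$, plus one square from each matched pair on rows $[2y,2y+1]$. The remaining third of the board consists of the squares $[3i+1,3i+2]\times[2y+1,2y+2]$ and $[3i+2,3i+3]\times[2y+1,2y+2]$, which contain no pair. Showing that half of them score is the whole content of the theorem. Your proposal leaves this open, and neither contingency closes it.

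\emph{Why the contingencies fail.} The plan of making every square either of type (a) or matched of type (b) is impossible, by the same counting you used for (a). A pair lies inside at most two unit squares, and there are about $n/2$ pairs. So if $a$ squares are of type (a) and $b$ of type (b), then $2a+b\le n+o(n)$. Together with $a+b=n$ this forces $a=o(n)$, so these mechanisms guarantee only about $n/2$. Any proof of $2n/3$ must therefore treat a linear number of squares by a genuine counting argument. The fallback averaging does not supply one. In your pairing every strip square has exactly $2$ Maker vertices and every matched pair has exactly $4$, so the leftover squares also average exactly $2$. A leftover square can carry $4$ Maker vertices (both right ends of its horizontal pairs and both domino vertices). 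Hence averaging alone certifies only a third of the leftover squares, i.e.\ $n/9$, for a total of $11n/18<2n/3$.

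\emph{How the paper closes it.} The paper uses a different pairing with a local pigeonhole argument. Half of the squares contain two pairs. The rest are partitioned into triples of pairwise vertex-disjoint squares such that the eight pairs meeting a triple cover $16$ vertices, $12$ of them on the triple. Maker gets $8$ of these $16$, at most $4$ of them off the triple. So the triple receives at least $4$ Maker vertices, and one of its squares receives two, giving $n/2+n/6$.

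\emph{Your pairing also works, but needs a non-local argument.} Fix a block $i$. Let $x_r=1$ if Maker gets $(3i+1,r)$, and let $\tau^L_y,\tau^R_y\in\{0,1,2\}$ be Maker's counts on rows $2y+1,2y+2$ of the domino columns $3i-1$ and $3i+2$. The four squares flanking the block on rows $[2y,2y+2]$ then contain
$[x_{2y}=x_{2y+1}]+[x_{2y+1}+x_{2y+2}+\tau^R_y\le 1]+[x_{2y+1}+x_{2y+2}\ge 1+\tau^L_y]$
unscored squares, and you must show that this averages at most $2$ over $y$. One way: the values $0$ and $2$ alternate down each domino column. Use a one-bit potential recording whether the domino column on the side of the block \emph{not} held by Maker has next non-unit value $2$. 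This bounds the total excess per block by $O(1)$. Without this, or the paper's triple argument, your proposal establishes only $n/2-o(n)$.
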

\begin{proof}
Figure~\ref{fig:squares2} shows a pairing, in green, for Maker that guarantees to get at least $2/3$ of the total number of squares. 
\begin{figure}[htb]
    \centering
    \includegraphics[scale=1.4]{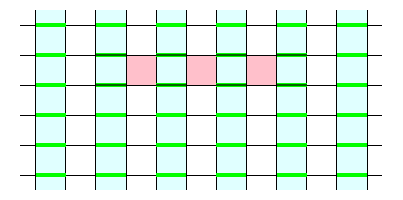}
    \caption{Pairing for Maker, square game with $s=2$.}
    \label{fig:squares2}
\end{figure}
Straightforwardly, this pairing guarantees that Maker will claim each blue square, i.e.,~half of the total number of squares. In addition, we will prove that for the triple of squares marked in red, she will score at least one of them. If we look at the eight green pairs, marked with black lines in the figure, touching the red squares, they cover the total of 16 vertices, out of which 12 are on the red squares. As the 16 vertices are paired Maker will claim half of them, eight vertices. Four of them have to be on the red squares and by pigeonhole principle there are two Maker's vertices on one of the red squares, which is enough for a score.

As all the grid squares that are not blue can be partitioned into triples congruent to the red triple, Maker will score at least a third of them. This ensures that Maker is guaranteed to score the total of $4n/6$ squares.
\end{proof}

\subsection{Square with $s=3$}
%$\blacktriangleright$ State of the art: $n/8\leq \scrp(\cG_\squa, 3)\leq 5n/16$, $2n/15\leq \scr(\cG_\squa, 3) \leq n/3$. \\

From Theorem~\ref{thm:upper_pairing} we get $\scrp(\cG_\squa, 3)\leq 5n/16 + o(n)$, and from Theorem~\ref{thm:square-s2} we get $\scr(\cG_\squa, 3)\leq n/3+ o(n)$.

\begin{theorem} \label{thm:fractal-pairing}
$\scrp(\cG_\squa, 3) \geq n/8 + o(n)$.
\end{theorem}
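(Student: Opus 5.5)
The plan is to reduce the statement to a local \emph{gadget} that forces one square with three Maker vertices, and then to pack copies of this gadget into the grid with density one gadget per eight squares. The gadget uses a vertex $c$ of the grid together with its four grid neighbours $e_N,e_E,e_S,e_W$. If Maker's pairing contains the pairs $\{e_N,e_S\}$ and $\{e_E,e_W\}$, then Maker claims one vertex from each of these pairs. Whatever Breaker does, Maker therefore owns one vertical and one horizontal neighbour of $c$. These two neighbours together with $c$ are three corners of one of the four unit squares around $c$. Hence, \emph{provided Maker owns $c$}, she scores at least one of the four squares incident to $c$. The fourth corner of these squares is irrelevant, since $s=3$. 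This is the square-grid analogue of the hexagon argument in Theorem~\ref{thm:Maker-s3}.

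The second step is to guarantee ownership of the centres cheaply. The simplest version takes as centres all vertices with both coordinates odd. Every unit square contains exactly one such vertex, so the centres' neighbourhoods partition the squares into blocks of four, with $n/4$ centres. The vertices with both coordinates even have the same density $1/4$, so I would pair each centre $(2i+1,2j+1)$ bijectively with the corner $(2i+2,2j+2)$. Maker then owns at least half of the centres, and with a full gadget at each owned centre this would give $\tfrac12\cdot\tfrac{n}{4}=n/8$ scored squares.

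The main obstacle is the density count for the neighbour pairs. The mixed-parity vertices (the $e$'s) have density only $1/2$, and each is adjacent to two centres. A direct gadget at every centre would therefore need twice as many neighbour vertices as exist. A naive fix, giving full gadgets only to a quarter of the centres, drops to $n/32$. This is exactly where I expect the ``fractal'' structure suggested by the theorem's label to be needed. Instead of fixing centres on a lattice, I would build the pairing hierarchically. At each scale, the vertices Maker is guaranteed to win by one level of pairs serve as centres for the next level, and the same set of vertices is reused in the role of $c$ at one level and of $e$ at another. One then checks by a self-similar counting argument (a geometric series over the levels) that the amortised cost is eight squares per guaranteed scored square. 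I would first try a recursive pattern along a single line: pair consecutive vertices of a row so that Maker gets two vertices adjacent to a common vertex. I would then use the perpendicular pairs of the neighbouring rows as the second pair of the gadget, and verify by a charging argument that each scored square can be charged to eight distinct squares.

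Finally, the boundary of the finite grid only affects a number of winning sets at bounded distance from it. By the standing two-dimensionality assumption this contributes the $o(n)$ term in the statement.
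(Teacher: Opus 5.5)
Your proposal has a genuine gap. It never exhibits a pairing that achieves $n/8$, and the cross-gadget mechanism it relies on cannot reach that bound. The gadget observation itself is correct, and so is your boundary remark.

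\textbf{Centre ownership.} Pairing each centre $(2i+1,2j+1)$ with the corner $(2i+2,2j+2)$ does not give Maker half of the centres. Against a pairing strategy, Breaker decides which vertex of each pair Maker ends up with. He can therefore claim every centre himself, apart from at most one taken by Maker's first move. To force Maker onto a centre, you must pair centres with centres.

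\textbf{Density cap.} Once centres are paired with centres, the cross gadget becomes too expensive. A gadget pair $\{v+(0,1),v-(0,1)\}$ has $v$ as its only common neighbour, so it serves no other centre. Its two endpoints also cannot both be centres, since that would require $v$ to be paired with both $v+(0,2)$ and $v-(0,2)$. Hence every pair of centres $\{v,w\}$, which certifies one point, needs four further private pairs. That is ten vertices per point, so any scheme of this type is capped at $n/10+o(n)<n/8$.

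The same observation defeats the hierarchical version you sketch. No single vertex is ever guaranteed to Maker, only one vertex of each pair, and the two members of a neighbour pair cannot both act as next-level centres. The closing claim that each scored square can be charged to eight squares is asserted rather than derived.

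\textbf{Missing idea.} The hierarchy should pass down a counting advantage, not ownership of specific vertices. The paper proceeds as follows.
\begin{enumerate}
\item Cut the grid into $3\times 3$ blocks of unit squares. Adjacent blocks share their boundaries, and each block has $16$ vertices.
\item Pair the $12$ non-corner vertices of each block inside the block, so that one of the four side unit squares (those in the middle of the block's sides) always receives three Maker vertices. Leave the four block corners unpaired.
\item The block corners form a coarser square grid in which each block plays the role of a unit square. Apply the same pairing there recursively.
\end{enumerate}
The link between levels is a pigeonhole count. The four corner unit squares of a block partition its $16$ vertices, and Maker holds exactly $6$ of the $12$ non-corner ones. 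So if she holds at least $3$ of the $4$ block corners, she holds at least $9$ of the $16$ vertices, hence at least $3$ vertices of some corner unit square. Thus a good square at one level forces a good unit square at the level below. Every block at every level therefore yields its own good unit square, and the score is at least $\sum_{i\ge 1} n/9^i = n/8 + o(n)$.
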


\begin{proof}
We will give a pairing for Maker. The set of pairs will be partitioned to levels.

For the first level we subdivide the initial square grid into the 3-by-3 squares, as depicted in Figure~\ref{fig:squares-s3-pairing} (left), and pair the vertices as marked by the blue segments. Note that the vertices marked red are not paired on this level, they are yet to be paired on higher levels.

Now on the second level we look at the grid consisting of the red vertices, which are conveniently arranged in a square grid. We again subdivide that square grid into the 3-by-3 squares, and apply the same pairing recursively. Again some vertices are not paired, we take them to the one higher level, repeating the process.
\begin{figure}[ht]
    \centering 
    \begin{minipage}[c]{0.55\linewidth}
        \includegraphics[scale=1.4]{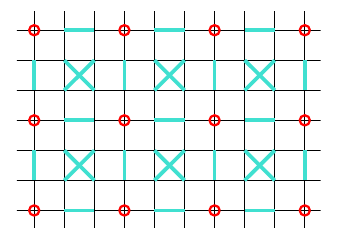}
    \end{minipage}
    $\qquad$
    \begin{minipage}[c]{0.3\linewidth}
        \includegraphics[scale=1.8]{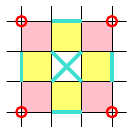}
    \end{minipage} 
    \caption{Pairing on the square grid with $s=3$. Left: The pairing on one level, where the vertices marked red are not paired. Right: The \emph{side} squares (yellow) and the \emph{corner} squares (red). }
    \label{fig:squares-s3-pairing}
    \end{figure}

Now looking at the first level, in each 3-by-3 square we observe the four \emph{side} squares, marked yellow in Figure~\ref{fig:squares-s3-pairing} (right). Note that one of them will have at least three Maker's elements and will increase the score by one.

Furthermore, note that the four \emph{corner} squares, marked red in Figure~\ref{fig:squares-s3-pairing} (right), form a partition of the 16 vertices in a 3-by-3 square. Hence, if at least three out of four vertices marked red are claimed by Maker, she will have more than half of all vertices in this 3-by-3 square. To see this note that the non-red vertices are paired within the 3-by-3 square. Then, clearly, at least one of the four red squares must have more than half vertices claimed by Maker, and that also increases Maker's score by at least one.

Now for the final argument, we look at the levels one by one starting from the last level. Each 3-by-3 square on a fixed level has at least one yellow square with Maker's advantage on the corners. Further more, if there is a Maker's advantage on the red corners of that 3-by-3 square, then there is also one red square with Maker's advantage. To see this, note that the 16 vertices are partitioned by the four red squares, and all of them are paired except the four red ones. This way, the Maker's advantage is propagated down the level hierarchy.

Summing the score up on all levels, each 3-by-3 square (on any level) contributes to the total Maker's score by at least 1. The number of 3-by-3 squares on the $i$-th level is $n/9^i$, so Maker's score must be at least
$$ \sum_{i=1}^{\lfloor\log_9(n)\rfloor} \frac{n}{9^i} = \frac{n}{8} + o(n).$$
\end{proof}

For this case, we can get a better bound for Maker that does not use a pairing strategy. We consider the finite rectangular grid $G_{n,m}$ with $n$ rows and $m$ vertices. The next lemma gives a lower bound on the score for $3\times 5$ grids.

\begin{lemma}
$\scr(G_{3,5},3)\geq 2$. This bound is valid no matter who starts. 
\end{lemma}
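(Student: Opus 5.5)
The plan is to describe an explicit strategy for Maker on the $3\times 5$ vertex grid. Its $2\times 4=8$ unit squares split into a left block (columns $1$--$3$) and a right block (columns $3$--$5$), each a $3\times 3$ vertex grid with four squares; the two blocks share the middle column. I will assume Breaker starts, which is the harder case, since Maker then has $7$ of the $15$ vertices instead of $8$. The target is one good square in each block. The two squares obtained must be distinct, which is automatic because the blocks share no square.

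The basic tool is the following forcing observation. Suppose a square has two Maker vertices and two free vertices. Then Maker scores it unless Breaker claims both free vertices, and after Breaker claims one of them Maker simply answers with the other. So a square reaching the state ``two Maker, two free'' is secured by a pairing response. Moreover, two such squares whose free pairs are disjoint are both secured simultaneously. Inside a $3\times 3$ block, suppose Maker owns the block center $c$ and an adjacent edge-midpoint $e$. Then the two squares containing the segment $ce$ are both in this state, with disjoint free pairs as long as Breaker has not yet touched them. Hence one extra Maker move in a block whose center she owns creates a double threat, which secures at least one and typically both squares of that block.

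The strategy therefore runs as follows. Maker first claims the block centers $(2,2)$ and $(2,4)$, or the global center $(2,3)$ when this is more useful. She then answers Breaker's moves so as to complete a center--midpoint pair in each block. She gives priority to responding to any threat Breaker creates against an already half-secured square, and otherwise makes the move that creates a new ``two Maker, two free'' square. If Maker starts, she plays the same strategy and has one tempo to spare. This is consistent with the statement holding no matter who starts.

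The expected main obstacle is the opening when Breaker starts, in particular when he takes a block center or the global center $(2,3)$, and the interaction between the two blocks through column $3$: a single Breaker move in column $3$ can damage squares of both blocks at once. This part I would handle by an explicit case analysis up to the left--right and top--bottom symmetries of the board, which leaves only a few essentially different first Breaker moves (corner, edge midpoint of the long side, block center, global center, short-side midpoint). For each, I would give Maker's reply and check that she keeps a pending threat in each block. Because the board has only $15$ vertices, I would also confirm the case analysis by exhaustive computer search, in the same way as the $C_{14}$ pairing computation mentioned in the introduction.
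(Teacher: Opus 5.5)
Your reduction to Breaker moving first is fine: the ``tempo to spare'' argument is the same monotonicity the paper gets from non-zugzwang. Your forcing observation is exactly the paper's preliminary remark that a $G_{3,2}$ whose two central vertices are Maker's yields two points by pairing the other four. The gap is that, for a finite tactical statement like this, the case analysis \emph{is} the proof, and you postpone it; the computer search is announced, not performed. Your rules (take the block centers, then center--midpoint moves, answer threats first) do not define a checkable strategy. The invariant ``keep a pending threat in each block'' also does not guarantee two points: Breaker blocks one single-vertex threat per move and can pair off every square that still needs two vertices.

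The critical case shows that an idea beyond your tool is needed. Take the line Breaker $y_3$, Maker $y_2$, Breaker $y_1$, Maker $y_4$, Breaker $y_5$; both Breaker replies are legitimate answers to the domino threats. Now every square contains two consecutive middle-row vertices and hence a Breaker vertex. So a square with two Maker and two free vertices can never occur again, and your only tool is vacuous.

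Maker must now chain single-vertex threats so that cashing one creates the next, and the natural play fails:
\begin{itemize}
\item Maker $x_2$ (threatening $x_1$ and $x_3$), Breaker $x_3$, Maker cashes $x_1$, Breaker $z_3$.
\item The only live squares are then $\{x_4,x_5,y_4,y_5\}$, $\{y_1,y_2,z_1,z_2\}$ and $\{y_4,y_5,z_4,z_5\}$.
\item They need the disjoint free pairs $\{x_4,x_5\}$, $\{z_1,z_2\}$, $\{z_4,z_5\}$, which Breaker pairs off, so Maker ends with one point.
\end{itemize}
The only winning move instead of $x_1$ is $z_3$, which creates simultaneous threats at $x_1,z_2,z_4$.

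The paper handles this position with $x_3$. If Breaker ignores the top row, $x_2$ scores and $x_1/x_4$ is a double threat. If he answers at $x_1$ or $x_2$ (symmetrically $x_4,x_5$), then $x_4$ scores and forces $x_5$, and $z_4$ with the pair $z_3/z_5$ gives the second point.

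The other openings also need precise lines. After the opening $x_3$, Maker plays $y_2$, $y_4$, then $y_3$, which forces $z_3$; then $x_2$ scores and threatens both $x_1$ and $x_4$. An opening outside column~3 is answered by $y_4$: the four dominoes having $y_4$ as a central vertex meet only in $y_4$, so Breaker's one reply cannot block them all.

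These explicit lines are what your proposal is missing.
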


\begin{proof}
Let $x_i,y_i,z_i$, $i=1,\dots, 5$, be the vertices of $G_{3,5}$ arranged as in Figure~\ref{fig:label3_5}. \\

\begin{figure}[ht]
    \centering
    \includegraphics[scale=1.15]{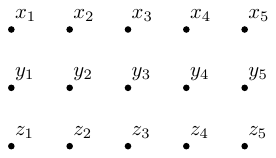}
    \caption{A subgrid $G_{3,5}$.}
    \label{fig:label3_5}
\end{figure}
Since scoring positional games belong to Milnor's universe~\cite{scoringPG}, they are non-zugzwang, which means that it is always better for a player to move rather than pass. Consequently, it suffices to prove the result when Breaker plays first. 

As a preliminary remark, note that on a $G_{3,2}$ grid, if Maker claims the two central vertices (before any move of Breaker), then she scores two points in the end by pairing the remaining vertices in the obvious way.

Assume first that Breaker does not play the first move in the central column. Without loss of generality, he plays a vertex $x_i$ or $y_i$ with $i=1,2$. Then Maker answers by claiming $y_4$. If Breaker now aims at avoiding Maker to get two points, he must prevent her to get the two central vertices of each horizontal and vertical subgrid $G_{3,2}$ including $y_4$. This is not possible because the intersection of all such patterns in the remaining board is empty. 

Assume now that Breaker plays $x_3$ (or $z_3$ equivalently). Then Maker claims $y_2$. As depicted by Figure~\ref{fig:G2_3move1}, there are now two threats for Breaker: indeed, if Maker plays $y_1$ or $z_2$, she would manage to have a $G_{3,2}$ grid with the two central vertices claimed. Hence Breaker must play in the intersection of the these two colored subgrids, meaning that she plays either $y_1$, $z_1$ or $z_2$. By symmetry, Maker now plays $y_4$, and Breaker is forced to answer on either $y_5$, $z_4$ or $z_5$. Maker now plays $y_3$ and Breaker is forced to answer $z_3$ (otherwise Maker plays it and wins two points). Finally Maker claims $x_2$ and gets one point, and for her next move, she claims either $x_1$ or $x_4$ to get the second point.

\begin{figure}[ht]
    \centering
    \includegraphics[scale=1.15]{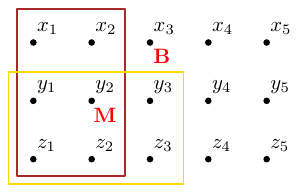}
    \caption{Solving the $G_{3,5}$ game, case $s=3$, first move $x_3$.}
    \label{fig:G2_3move1}
\end{figure}

The last case is when Breaker first plays $y_3$. Maker answers $y_2$, as shown by Figure~\ref{fig:G2_3move2}. There is now a threat on the left subgrid $G_{3,2}$ where Breaker is forced to play. Let $b_1$ be the vertex played by Breaker on this subgrid. Now Maker claims $y_4$, and by symmetry, let $b_2$ be the vertex played by Breaker on the right subgrid. 

\begin{figure}[ht]
    \centering
    \includegraphics[scale=1.15]{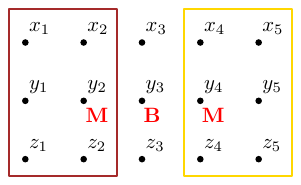}
    \caption{Solving the $G_{3,5}$ game, case $s=3$, first move $y_3$.}
    \label{fig:G2_3move2}
\end{figure}

If $b_1\neq y_1$, then Maker now plays $y_1$ and pairs two vertices of a square inside this pattern to ensure one point. Breaker is now forced to claim the last vertex of the subgrid, otherwise he loses a second point. Now Maker focuses on the right subgrid. If $b_2\neq y_5$, then Maker plays $y_5$ and will get one point with the similar strategy. Otherwise, Maker plays $x_4$ and pairs $x_3$ and $x_5$ to ensure her second point.

It now remains to consider the case where $b_1=y_1$ and $b_2=y_5$. Then Maker plays $x_3$. If Breaker does not immediately answers on the top row, then Maker wins by claiming $x_2$ ensuring one point, and pairing $x_1$ with $x_4$ ensuring another point. By symmetry, assume without loss of generality that Breaker claims $x_1$ or $x_2$. Then Maker claims $x_4$ getting one point, Breaker is forced to answer $x_5$, and Maker wraps up by claiming $z_4$, and pairing $z_3$ with $z_5$.
\end{proof}

The above lemma allows to get a lower bound on the score that is better than the one using a pairing strategy.

\begin{theorem} \label{thm:SCsquares=3}
$\scr(\cG_\squa, 3) \geq 2n/15$.
\end{theorem}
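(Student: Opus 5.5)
We tile the large square grid by vertex-disjoint copies of $G_{3,5}$ and have Maker play the local strategy of the preceding lemma inside each tile independently. Concretely, we partition the vertex set (away from the boundary) into blocks of $3$ consecutive rows and $5$ consecutive columns. Each block is a copy of $G_{3,5}$ and contains $2\cdot 4=8$ unit squares. The remaining unit squares are those straddling two adjacent blocks, and these we simply concede to Breaker.

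To count, note that each block occupies $15$ vertices. In a two-dimensional grid the number of unit squares is $n$, and the number of vertices is $n+o(n)$. So there are $n/15+o(n)$ blocks, up to boundary effects, which contribute only $o(n)$ by the standing assumption on the grids.

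Maker's global strategy is local in each block. Whenever Breaker claims a vertex in some block $B$, Maker answers inside the same block $B$ according to the strategy from the lemma for $G_{3,5}$ in which Breaker moves first. The lemma is stated to hold regardless of who starts, so this is legitimate. If Breaker's move completes or exhausts block $B$, or if the local strategy has nothing to do there, Maker plays an arbitrary vertex elsewhere. On Maker's very first move, and on any extra move of this kind, she claims an arbitrary free vertex. Because scoring games are non-zugzwang (as invoked in the lemma's proof, via Milnor's universe), an extra Maker vertex in some block can only help her there. Formally, Maker pretends the extra vertex is unclaimed. If the local strategy later asks her to take it, she plays another arbitrary vertex instead, which is the standard strategy-stealing bookkeeping. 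The score is monotone in Maker's vertices, so this is harmless.

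With this strategy, each block yields at least $2$ good squares among its own internal squares. Hence Maker's total score is at least $2\cdot n/15 - o(n)$, which is the claimed bound $2n/15$ up to the usual $o(n)$ term.

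The main obstacle is making rigorous the combination of independent local games. Breaker may play many moves in one block while Maker's responses stay confined to it, and Maker's surplus moves have to be absorbed. The justification above handles both points: each local game sees Breaker moving first with strict alternation in that block, plus possibly extra Maker vertices, and the lemma (together with monotonicity) covers exactly that situation.
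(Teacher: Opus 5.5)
Your proof is correct and follows the paper's argument: you tile by vertex-disjoint copies of $G_{3,5}$, concede the squares between tiles, and apply the preceding lemma in each tile to get $2$ points per $15$ squares. The paper counts $8$ internal plus $7$ east/south squares per tile where you count $15$ vertices per tile, which is equivalent. Your rule of answering in the block Breaker just played in, with the extra-move bookkeeping, is a more careful justification of combining the local games than the paper's remark that Maker is first or second on each tile ``depending on the parity''.
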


\begin{proof}
It suffices to tile the grid with $G_{3,5}$ subgrids. On each tile, Maker applies her winning strategy that yields at least two points (depending on the parity she will be either first or second player on each tile). The squares between the tiles are considered lost for Maker. Since each $3\times 5$ tile is surrounded by seven squares (east and south), it means that with this strategy, Maker claims at least two points for every subgrid of $15$ squares.
\end{proof}

%%%%%%%%%%%%%%%%%%%%%%%%%%%%%%%%%%%%%%%%%%%%%

%%%%%%%%%%%%%%%%%%%%%%%%%%%%%%%%%%%%%%%%%%%%%
\section{Triangular grid, rhombus (in all three directions), $k=4$}\label{sec:rhombus}

\subsection{Rhombus with $s=1$}

%$\blacktriangleright$ State of the art: $11n/12 \leq \scrp(\cG_\rhom, 1) \leq 23n/24$, $15n/16\leq \scr(\cG_\rhom, 1) \leq 77n/78$

From Theorem~\ref{thm:upper_pairing} we get $\scrp(\cG_\rhom, 1) \leq 23n/24 + o(n)$, from Theorem~\ref{thm:rhombus-maker-s4} we get $\scr(\cG_\rhom, 1) \leq 77n/78 + o(n)$ and from Erd\H{o}s-Selfridge Theorem, we get $\scr(\cG_\rhom, 1)\ge 15n/16 + o(n)$.

\begin{theorem} \label{thm:rhombus-pairing-s1}
$11n/12 \leq \scrp(\cG_\rhom, 1)$.
\end{theorem}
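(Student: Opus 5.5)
The idea is to follow the counting logic of Theorem~\ref{thm:scrp-trig-s1}: first fix a pairing in which as many rhombi as possible contain a whole pair, and then show Maker touches at least half of the remaining rhombi. The counting goes as follows. In the triangular grid every vertex has degree $6$, so there are about $3|V|$ grid edges. Every rhombus is the union of two triangles sharing an edge (its short diagonal), so $n = 3|V| + o(n)$.

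A grid edge $e$ lies in exactly $5$ rhombi: once as the short diagonal, and four times as a side. A perfect pairing along grid edges has $|V|/2$ pairs. So the number of (pair, rhombus containing it) incidences is $5|V|/2 = 5n/6$.

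If no rhombus contains two paired edges, then exactly $5n/6$ rhombi contain a whole pair and are touched for sure. Two paired edges inside one rhombus would have to be opposite sides, since they are disjoint. The remaining $n/6$ rhombi have their four vertices in four distinct pairs. If Maker touches at least half of them, the total is $5n/6 + n/12 = 11n/12$, as claimed. So the plan has two requirements on the pairing $\cP$:
\begin{enumerate}[(i)]
\item no two pairs of $\cP$ form opposite sides of a rhombus;
\item the uncovered rhombi can be partitioned into couples $\{R_1,R_2\}$ such that some pair $\{u,v\}$ of $\cP$ has $u\in R_1\setminus R_2$ and $v\in R_2\setminus R_1$.
\end{enumerate}
Condition (ii) immediately gives one touched rhombus per couple, since Maker claims $u$ or $v$.

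For the construction, I would look for a periodic pairing, described on a small fundamental domain and drawn in a figure as in the earlier sections. A natural first attempt is a ``brick'' pattern: in each horizontal row, pair consecutive vertices $(2j,2j+1)$, and shift the offset by one from row to row. This kills the obvious opposite-side configurations in horizontal rhombi, because paired edges in adjacent rows are then staggered rather than stacked. If this stacking still creates opposite-side coincidences for one of the two slanted rhombus orientations, I would instead mix directions: pair along horizontal edges in some rows and along one slanted direction in others, with period two or three. I would then adjust until (i) holds. Checking (i) is a finite verification on the fundamental domain: for each paired edge, look at its four rhombi-as-side and check the opposite side is unpaired.

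Next I would list the uncovered rhombi in one period; there should be exactly $1/6$ of the rhombi of the period, so for instance $3$ out of $18$, or $6$ out of $36$. For each of them I would exhibit a partner uncovered rhombus and a connecting pair as in (ii), marking them with a common colour in the figure, in the same spirit as the black quadruples in Theorem~\ref{thm:scrp-trig-s1}.

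The main obstacle is simultaneously achieving (i) and (ii). Condition (i) tends to spread the pairs evenly, and that can isolate the uncovered rhombi so that no pair straddles two of them. If a plain couple matching fails, I would fall back on a weaker averaging version of (ii), exactly as in Theorem~\ref{thm:scrp-trig-s1}. There one groups uncovered rhombi into blocks with a few pairs, each pair having its endpoints in different rhombi of the block, and checks by a small case analysis on Maker's choices that at least half of each block is touched. Boundary effects only cost $o(n)$, as agreed throughout the paper.
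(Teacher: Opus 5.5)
\textbf{Verdict: genuine gap.} Your counting reduction is correct. If every pair is a grid edge, each pair lies in exactly five rhombi, so under (i) exactly $5n/6$ rhombi contain a pair, and (ii) then gives $5n/6+n/12=11n/12$. The problem is that the theorem is exactly the existence of a pairing satisfying (i) and (ii), and you never exhibit one. ``Adjust until (i) holds'' is a search, not an argument, and you leave open whether (i) and (ii) can hold simultaneously.

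Worse, your one concrete candidate violates (i) at every pair. Write vertices as $(a,b)=a e_1+b e_2$, with $e_1$ horizontal and $e_2$ at $60^\circ$. Consecutive rows are offset by half an edge. So the horizontal edge $\{(a,b+1),(a+1,b+1)\}$ is the opposite side of a rhombus to both $\{(a,b),(a+1,b)\}$ and $\{(a+1,b),(a+2,b)\}$. A row paired brick-wise contains exactly one of these two consecutive edges, whatever its offset. Hence every horizontal pair forms a two-pair rhombus with a pair in the adjacent row, and staggering does not help at all.

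The gap can be closed within your framework, as follows.
\begin{enumerate}[(1)]
\item \emph{The pairing.} Pair $(a,b)$ with $(a+1,b)$ if $a-b\equiv 0 \pmod 4$, and with $(a,b+1)$ if $a-b\equiv 3 \pmod 4$. This is a perfect pairing along grid edges.
\item \emph{Condition (i).} Opposite sides of a rhombus are translates of each other by $\pm e_1$, $\pm e_2$ or $\pm(e_2-e_1)$, which changes $a-b$ by $\pm 1$ or $\pm 2$. In contrast, all pairs of a given direction start in a single residue class of $a-b$ modulo $4$, and no pair has direction $e_2-e_1$. Hence no rhombus contains two pairs.
\item \emph{The uncovered rhombi.} A direct check of the three orientations shows they are exactly $R_1(x)=\{x,x+e_1,x+e_2,x+e_1-e_2\}$ for $x=(a,b)$ with $a-b\equiv 2$, and $R_2(y)=\{y,y+e_1,y+e_2,y+e_2-e_1\}$ for $y=(a,b)$ with $a-b\equiv 1$. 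That is two per four vertices, so $n/6$ in total.
\item \emph{Condition (ii).} The pair $\{x+e_1,x+e_1+e_2\}$ has one endpoint in $R_1(x)\setminus R_2(x+e_2)$ and the other in $R_2(x+e_2)\setminus R_1(x)$. Since $x\mapsto x+e_2$ is a bijection between the two classes, this gives the couples you need.
\end{enumerate}

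For comparison, the paper does not aim for maximal coverage. In its pairing, $5n/18$ rhombi (the central and corner rhombi of the shaded areas) are not guaranteed by a pair. It compensates with stronger local bounds: at most half of the central rhombi and at most a quarter of the corner rhombi remain untouched.
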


\begin{proof}
Maker pairs the vertices as in Figure~\ref{fig:rhombus-pairing-s1}, marked in red. The only rhombuses that have a chance to have no Maker's vertices are completely contained within one of the green (shaded) areas. 
%Note that within one green area all rhombuses contain the two middle vertices. Hence, unless both of these middle vertices are claimed by Breaker, the green area contains no fully Breaker's rhombuses.
Each green area contains five rhombi -- one central rhombus and the four corner ones. 
\begin{figure}[htb]
    \centering
    \includegraphics[scale=1.6]{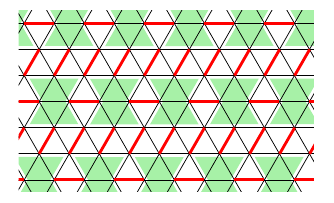}
     \caption{Pairing for Maker, rhombus game with $s=1$.}
    \label{fig:rhombus-pairing-s1}
\end{figure}

Looking at all central rhombi, depicted in gray in Figure~\ref{fig:rhom-central}, all of their corners are paired among themselves in a square-grid-like fashion. Hence, Maker will claim at least half of their total number of vertices, so clearly at most half of them can have no Maker's vertices at all.

As for the corner rhombi, we observe the configuration depicted in Figure~\ref{fig:rhom-corner}. Due to the pairing of the three vertex pairs emphasized in the figure, at most one of these four rhombi can have no Maker's vertices. Now it remains to observe that all the corner rhombi in the board can be partitioned into such configurations of four each. This implies that at most one fourth of the total number of corner rhombi will not be scored by Maker.
\begin{figure}[ht]
    \centering
    \begin{minipage}{0.40\textwidth}
        \centering
        \includegraphics[scale=1.6]{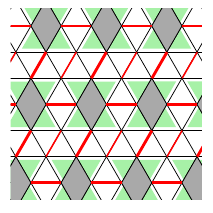}
        \caption{Central rhombi.}
        \label{fig:rhom-central}
    \end{minipage}
    \quad\quad
    \begin{minipage}{0.40\textwidth}
        \centering
        \includegraphics[scale=2]{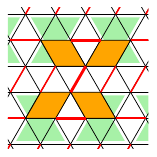}
        \caption{The configuration of four corner rhombi.}
        \label{fig:rhom-corner}
    \end{minipage}
\end{figure}

As the numbers of central and corner rhombi are, respectively, $n/18$ and $2n/9$, the total number of rhombi with no vertices of Maker at the end of the game is at most
$\frac12\cdot\frac{n}{18} + \frac14\cdot \frac{2n}{9} = \frac{n}{12}$.
\end{proof}

\subsection{Rhombus with $s=2$}

%$\blacktriangleright$ State of the art: $19n/36 \leq \scrp(\cG_\rhom, 2) \leq 71n/96$, $\scr(\cG_\rhom, 2) \leq 89n/96$

From Theorem~\ref{thm:upper_pairing} we get $\scrp(\cG_\rhom, 2) \leq 71n/96 + o(n)$ from Theorem~\ref{thm:rhombus-pairing-s3} we get $\scr(\cG_\rhom, 2) \leq 89n/96 + o(n)$.

\begin{theorem} \label{thm:rhom-pairing-LB}
$19n/36 \leq \scrp(\cG_\rhom, 2)$.
\end{theorem}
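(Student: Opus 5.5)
We will exhibit an explicit periodic pairing for Maker on the triangular grid and then account, class by class, for the rhombi she is guaranteed to score with $s=2$. We first fix the counting. On the triangular grid with $V$ vertices there are $3V$ rhombi, one per edge of the grid (each rhombus is two adjacent triangles glued along their short diagonal), so $n=3V$. Every vertex lies in $12$ rhombi. Writing a fundamental domain with $18$ rhombi per $6$ vertices, the target $19n/36$ means $19$ scored rhombi per $36$, i.e. per $12$ vertices. We therefore look for a pairing with period $12$ vertices, that is, $6$ pairs per period. It should be similar in spirit to the pairing of Figure~\ref{fig:rhombus-pairing-s1}, but chosen so that many rhombi contain a whole pair.

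\textbf{Step 1: rhombi containing a pair.} A pair $\{u,v\}$ of adjacent vertices lies in exactly $4$ rhombi: the rhombus whose short diagonal is $uv$, and the rhombi in which $uv$ is one of the four sides. A pair at distance $\sqrt3$ lies in exactly one rhombus, as its long diagonal. Each rhombus containing a full pair already has one Maker vertex guaranteed. It needs one more. Rhombi containing two disjoint pairs are scored outright. We will choose the pairing so that a large fraction of rhombi contain two pairs: these are sure points, just as the blue squares were in Theorem~\ref{thm:square-s2}.

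\textbf{Step 2: local pigeonhole on groups.} For the remaining rhombi we reuse the pigeonhole argument from Theorems~\ref{thm:square-s2} and~\ref{thm:rhombus-pairing-s1}. We partition these rhombi into small congruent groups and find a set of pairs covering each group. Counting how many of the vertices Maker claims from those pairs must land inside the group, we derive that at least a fixed fraction of the group has two Maker vertices.

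Two kinds of groups should appear:
\begin{itemize}
\item rhombi containing one pair, where the other two vertices are paired across to a partner rhombus; such a partnered couple yields at least one point, as in the matched triangles of Theorem~\ref{thm:scrp-trig-s2};
\item rhombi containing no pair, handled by an averaging argument. Since Maker owns exactly half the vertices, the average number of Maker vertices per rhombus is $2$, and a worst-case distribution bounds how many of them can be bad.
\end{itemize}

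The constants should combine as $19/36 = a\cdot 1 + b\cdot\tfrac12 + c\cdot\gamma$, where $a,b,c$ are the densities of the three classes. For instance, $1/6 + (1/2)(1/2) + \ldots$; we will tune the pairing so that this sum works out.

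\textbf{Main obstacle.} The main obstacle is finding the concrete pairing and a partition of the non-automatic rhombi into groups whose guarantees add up exactly to $19/36$. The difficulty is that a rhombus shares vertices with many others, so the pigeonhole groups overlap through shared pairs. The vertex-counting in each group must therefore use only pairs whose Maker vertices are not double-counted across groups.

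What we would try first is a striped pairing. Rows of horizontal edge-pairs alternate with rows paired along one slanted direction, which produces a 2-periodic structure in which horizontal-type rhombi spanning two paired rows are fully guaranteed. We would then verify by a finite case check on one period of $36$ rhombi, enumerating the $2^6$ choices of Breaker within the $6$ pairs of the period, that at least $19$ rhombi per period are scored. This reduces the proof to a finite verification, which can also be confirmed by computer.
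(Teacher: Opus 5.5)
\textbf{Verdict: genuine gap.} Your proposal does not prove the statement. No pairing is ever specified, and the decisive step is left as ``we will tune the pairing so that this sum works out'' and ``what we would try first''. Nothing in the text establishes $19/36$.

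The premise guiding your search is also unfounded. $19n/36$ is a guaranteed lower bound produced by an analysis, not the exact worst-case density of a fundamental domain. So nothing forces a period of $12$ vertices and $6$ pairs. In the paper the constant arises as $\frac{1}{3}\bigl(\frac{2}{3}+\frac{1}{2}+\frac{5}{12}\bigr)$.

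There are also two smaller problems. An adjacent pair lies in $5$ rhombi (once as short diagonal, four times as a side), not $4$. And your averaging step for pair-free rhombi needs that class to be balanced with respect to the pairs. The global average of $2$ Maker vertices per rhombus does not transfer to an arbitrary subclass.

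Your fallback verification is unsound. Enumerating the $2^6$ Breaker choices inside one period implicitly extends them periodically, but Breaker need not respond periodically. Rhombi crossing the period boundary depend on neighbouring periods. The minimum over periodic assignments is therefore only an upper bound for the true minimum, so checking it proves nothing.

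A concrete example: take horizontal pairs along every row. Under any assignment constant along rows, each pair-free rhombus with two horizontal sides gets exactly $2$ Maker vertices. But if Breaker alternates which endpoint Maker receives, the same way in every row, half of those rhombi get $0$.

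The idea you are missing is the one the paper uses. The rhombi split by orientation into three families, and each family is a square grid on the same vertex set.
\begin{itemize}
\item In the first family, take the $s=2$ square pairing of Theorem~\ref{thm:square-s2}; it scores $2/3$ of that family.
\item In the second family, the same pairing, read in that family's coordinates, lets you match rhombi at distance two along each strip so that one of each matched couple scores. This gives $1/2$.
\item In the third family, use the same matching trick on the strips lying over the pairs. On the strips between them, use the worst-case $4$-versus-$1$ averaging. This gives $\frac{1}{2}\cdot\frac{1}{2}+\frac{1}{2}\cdot\frac{1}{3}=\frac{5}{12}$.
\end{itemize}
Since each family is a third of all rhombi, the total is $19n/36$.
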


\begin{proof}
We can see the rhombus grid as the superposition of three square grids. If we apply the pairing depicted in Figure~\ref{fig:squares2} for the square grid and $s=2$ to one of these grids, we get the score of $2/3$ squares for this grid. The same pairing, when transferred to the other two square grids, results in two pairings depicted in Figure~\ref{fig:rhoms2_pairings}. 
\begin{figure}[ht]
    \centering
    $\quad$\begin{minipage}[c]{0.46\linewidth}
        \includegraphics[width = 7cm]{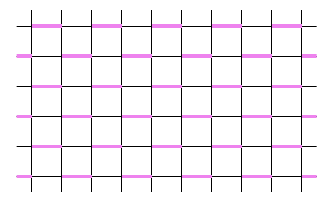}
    \end{minipage}
    $\quad$
    \begin{minipage}[c]{0.46\linewidth}
        \includegraphics[width = 7cm]{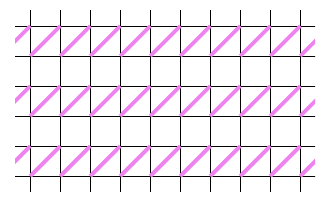}
    \end{minipage}
    \caption{The pairing of for squares $s=2$, when transferred to the two other square grids of the rhombus grid.} $\quad$
    \label{fig:rhoms2_pairings}
\end{figure}

For the second (left) pairing, the squares in each horizontal strip can be matched such that each square is matched to one at distance two from it. Then at least one of the two squares in each matched pair is guaranteed to produce a score. This results in the score of $1/2$ squares on the second grid.

Finally, for the third (right) grid, we partition the squares in two sets -- the squares on the horizontal strips that run on top of the pairs, and the squares that are on the horizontal strips between the pairs. 

The squares on top of the pairs can be matched such that one of the two matched squares is guaranteed to produce a score. For the other squares between the pairs, we can assume the worst-case distribution -- scored squares have 4 Maker's marks, the remaining squares have 1 mark. This way a third of those squares are guaranteed to yield a score, and the total score on the third pairing is $1/2\cdot 1/2 + 1/2\cdot 1/3 = 5/12$.

Hence, the total score in this case is at least $(1/3\cdot 2/3+1/3\cdot 1/2+1/3\cdot 5/12)n = 19n/36$.
\end{proof}

% \begin{figure}[htb]
%     \centering
%     \includegraphics[scale=0.27]{figures/rhoms2_pairing2_onehalf.png}
%      \caption{Temporary picture, to be removed later. \mSt{This is an outcome of the third pairing on the squares, with $1/2$ squares scored. Could be that Breaker cannot perform any better than that. But the best I could prove was $5/12$.}}
%     \label{fig:rhombus-pairing-s3}
% \end{figure}

% \begin{figure}[ht]
%     \centering
%     \begin{minipage}[c]{0.45\linewidth}
%         \includegraphics[width = 8cm]{figures/rhoms2_pairing1_onehalf.png}
%     \end{minipage}
%     \hfill
%     \begin{minipage}[c]{0.45\linewidth}
%         %\includegraphics[width = 8cm]{figures/rhoms2_pairing2_twothird.png} \\ \\
%         %
%         \hspace*{1cm}\includegraphics[width = 6cm]{}
%     \end{minipage}
%     \caption{My best guest for what is the best strategy for breaker against these pairings. \mSt{}There is an outcome of the last pairing (right) with $1/2$ squares scored, see picture.}
%     \label{fig:rhoms2_values}
% \end{figure}

\subsection{Rhombus with $s=3$}

%$\blacktriangleright$ State of the art: $7n/96 \leq \scrp(\cG_\rhom, 3) \leq 5n/16$, $\scr(\cG_\rhom, 3) \leq 17n/36$

From Theorem~\ref{thm:upper_pairing} we get $\scrp(\cG_\rhom, 3) \leq 5n/16 + o(n)$ and from Theorem~\ref{thm:rhom-pairing-LB} we get $\scr(\cG_\rhom, 3) \leq 17n/36 + o(n)$.

\begin{theorem} \label{thm:rhombus-pairing-s3}
$7n/96 \leq \scrp(\cG_\rhom, 3)$
\end{theorem}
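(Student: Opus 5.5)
We aim to show $\scrp(\cG_\rhom,3)\ge 7n/96$. The rhombus grid is the superposition of three square grids, one per rhombus orientation, each carrying $n/3$ winning sets. We will build a single pairing of the triangular grid's vertices and count, orientation by orientation, the rhombi that receive at least three Maker vertices. The natural base is the fractal pairing of Theorem~\ref{thm:fractal-pairing}, applied to one of the three square grids (say the horizontal one). On that grid it guarantees a $1/8$ fraction of rhombi with three Maker vertices, which gives $\frac13\cdot\frac18 n = n/24 = 4n/96$. It then remains to extract an extra $3n/96 = n/32$ from the two other orientations.

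For the two transferred grids we will proceed as in the proof of Theorem~\ref{thm:rhom-pairing-LB}. We redraw the same pairing as it appears on the other two square grids of the rhombus grid and look for local configurations where a pair forms a side or a diagonal of a rhombus of the other orientation. Whenever Maker already owns a vertex forced by the recursive structure (a corner with "Maker's advantage"), a pair lying inside a rhombus of the transferred grid, together with one more forced vertex, yields three Maker vertices.

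The concrete step is to find a set of disjoint groups of rhombi of the other two orientations. In each group, a pigeonhole count over the pairs meeting the group (as in the red triple argument of Theorem~\ref{thm:square-s2}) shows that at least one rhombus gets three Maker vertices. The groups should have density totaling $n/32$. The first thing to try is using the first-level $3\times3$ blocks only: each block has $16$ vertices, of which $12$ are paired internally. Counting rhombi of the other two orientations lying inside a block, we will look for one guaranteed rhombus per block in each orientation.

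Since first-level blocks have density $1/9$ relative to $n/3$, one extra rhombus per block per orientation would give $2\cdot\frac{n}{27}$, which is more than needed. So a weaker guarantee suffices, for instance one extra scored rhombus per pair of blocks, or per $3/8$ of the block count. If the plain fractal pairing does not give this, we will instead modify the pairing inside each block. We keep the properties that one side square with three Maker vertices is guaranteed and that the advantage propagates to the corners, and we choose the remaining internal pairs so that they lie along edges shared by rhombi of the other orientations. An edge of the triangular grid lies in exactly two rhombi of each non-parallel orientation, which gives a threat-pair structure.

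The main obstacle is this step: certifying that Breaker cannot simultaneously spoil the side squares of the base grid and the rhombi of the transferred grids. The pairs are shared between orientations, so Breaker's choice in one pair affects both counts, and the bound must come from a joint pigeonhole argument over all pairs touching a block rather than from adding separate guarantees. Once a per-block guarantee is established, summing it with the $n/24$ from the base grid and discarding boundary effects as $o(n)$ yields $7n/96$.
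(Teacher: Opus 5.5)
\textbf{Verdict: genuine gap, though a fixable one.} Your base count is correct. Each rhombus orientation is a square grid on the full vertex set, so the fractal pairing of Theorem~\ref{thm:fractal-pairing} yields $\frac13\cdot\frac18 n=n/24$.

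The proposal stops exactly where the remaining $3n/96$ must be earned. No group of rhombi, no pigeonhole count and no modified pairing is ever produced, so as written nothing beyond $n/24$ is proved. Two parts of the plan are also off:
\begin{itemize}
\item The mechanism you describe, a pair inside a rhombus plus vertices ``forced'' by corner advantage, does not exist. A pairing never forces Maker to own any particular paired vertex, and ``advantage'' only says that at least three of four corners are Maker's.
\item The obstacle you single out is not real. A pairing guarantee holds for every way Breaker splits the pairs, and the three orientations are disjoint families of winning sets. So guarantees proved separately per orientation simply add; no joint argument is needed.
\end{itemize}

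The missing step can be supplied along your own first option, with the unmodified pairing and first-level blocks only. I checked the following, but you should verify it yourself.
\begin{itemize}
\item Coordinates: identify the vertices with $\mathbb{Z}^2$ so that the grid edges are $(1,0),(0,1),(1,-1)$. The base rhombi are the unit squares, and the other two orientations are $A(q)=\{q,q+(1,0),q+(0,1),q+(1,-1)\}$ and $B(q)=\{q,q+(0,1),q+(1,0),q+(-1,1)\}$.
\item The pairing on the block $[0,3]^2$: the first-level pairs are the shared side pairs $\{(1,0),(2,0)\}$, $\{(0,1),(0,2)\}$, $\{(3,1),(3,2)\}$, $\{(1,3),(2,3)\}$, and the two diagonals $\{(1,1),(2,2)\}$, $\{(1,2),(2,1)\}$ of the middle square. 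This is the pairing forced by the description in Theorem~\ref{thm:fractal-pairing}.
\item The count: put $d=1$ if Maker gets $(1,1)$, else $d=0$. Each of $A(0,1),A(1,1),A(1,2),A(2,2)$ contains one whole pair. Their Maker counts sum to $3+2d$ over the first two and $5-2d$ over the last two. The same holds for $B(1,0),B(1,1)$ and for $B(2,1),B(2,2)$.
\item Conclusion: one of the two sums equals $5$, so each block contains a good rhombus of each extra orientation. This gives $n/27$ per extra orientation, and in total $n/24+2n/27=25n/216>7n/96$.
\end{itemize}

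The paper argues quite differently. It pairs hexagon centres in couples and nests green and orange pairs inside each hexagon, so that Maker completes a whole triangle at the centre she gets. That triangle lies in three rhombi, one of each orientation, each with three Maker vertices. Purple pairs wrapped around half of the hexagons then add one more good rhombus per four hexagons, for $7$ per $96$ rhombi.
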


\begin{proof}
Maker pairs the vertices as in Figure~\ref{fig:rhombus-pairing-s3}, where all the light blue hexagons have the same pairing on them, with the middle vertex of each hexagon paired with the middle of another hexagon, denoted by red. The pairing on the remaining vertices (not contained in hexagons) is denoted by purple, and it is such that for half of the hexagons there is a purple pair ``wrapped'' around it. 
\begin{figure}[htb]
    \centering
    \includegraphics[scale=1.5]{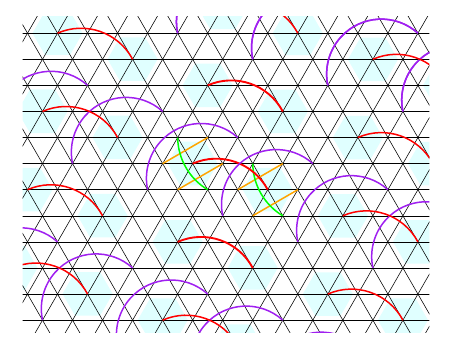}
     \caption{Pairing for Maker, rhombus game with $s=3$. The internal pairing is drawn for only one pair of hexagons.}
    \label{fig:rhombus-pairing-s3}
\end{figure}

First, let note that out of every two paired red vertices Maker will claim one. Let us concentrate on the hexagon around that vertex. Out of the two paired green vertices Maker will claim one, resulting in two neighboring Maker's vertices, one red and one green. Finally, out of two paired orange vertices that can complete a Maker's triangle Maker will claim one, thus claiming a triangle. This triangle is contained in three rhombuses with three Maker's vertices, and these rhombuses do not coincide with such rhombuses obtained from other hexagons. Hence, for every pair of hexagons we obtain three rhombuses with three Maker's vertices.

On top of that, out of every two hexagons with paired centers that have the purple pair ``wrapped'' around them, one will have the center claimed by Maker. Then, looking at the two rhombuses that contain that red center, two orange vertices and one of the paired purple vertices, by simple counting we conclude that one of them has to have at least three Maker's vertices. This further increases the score by one, as such a rhombus does not overlap with the already counted ones.

All in all, the guaranteed score is at least three for every pair of hexagons, and another one for every four hexagons.
\end{proof}

% $\scrp(\cG_\rhom, 3) \geq \frac{n}{24}$. {\color{orange}(Val: if I'm not mistaken. Also, this bound is not very good because the pairings from square also gives that)}
%
% \begin{center}
% \includegraphics[width = 10cm]{figures/rombus4.png}
% \end{center}

\subsection{Rhombus with $s=4$}

%$\blacktriangleright$ State of the art: $n/78 \leq \scr(\cG_\rhom, 4) \leq n/16$.

From Theorem~\ref{thm:upper_pairing} we get $\scrp(\cG_\rhom, 4) \leq n/16 + o(n)$, and from Erd\H{o}s-Selfridge Theorem we get $\scr(\cG_\rhom, 4) \leq n/16 + o(n)$.

%(From Theorem~\ref{thm:rhombus-pairing-s1}, we get $\scr(\cG_\rhom, 4) \leq n/12$, which is worse.)

%\textbf{Conjecture.} $\scrp = 0$ ??
%\todo[inline]{study the conjecture }

\begin{theorem} \label{thm:rhombus-maker-s4}
$n/78 \leq \scr(\cG_\rhom, 4)$.
\end{theorem}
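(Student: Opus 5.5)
Following the approach of Theorem~\ref{thm:SCsquares=3}, I would use a \emph{local gadget plus tiling} strategy: find a finite patch $P$ of the triangular grid on which Maker, whether she moves first or second there, can force a fully claimed rhombus. Then I would tile the board with disjoint copies of $P$ so that each tile, together with the rhombi charged to it (those not contained in any tile), accounts for $78$ rhombi. Maker plays every Breaker move in the tile where it occurred, and plays arbitrarily when that tile is finished. Since scoring positional games are non-zugzwang, being the second player in a tile, or passing within it, does not hurt her. This gives one point per $78$ rhombi, i.e.\ $n/78 - o(n)$.

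The gadget I would aim for is a hexagon-shaped patch of the triangular grid. Maker first claims a center vertex $c$. Breaker answers somewhere in the patch. Maker then claims a neighbour $v$ of $c$, chosen so that both rhombi through the edge $cv$ are still free of Breaker. Breaker must then block many potential rhombi. The edge $cv$ lies in several rhombi, in all three directions, and each one needs two more vertices. Maker should next build a triangle $c,v,w$ of her own, using the same threat structure as in the proof of Theorem~\ref{thm:Maker-s3}. Once she owns a triangle, each of its three edges extends to a rhombus by adding one apex vertex. That gives three disjoint one-vertex threats, and Breaker can only block one per move. This double-threat stage is the standard finishing move.

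The main obstacle is ensuring that Maker actually reaches a triangle with at least two independent completion vertices still open, while Breaker gets an extra move in the patch at every step. I would proceed by case analysis on Breaker's first reply, using the 6-fold symmetry around $c$. If Breaker does not answer adjacent to $c$, Maker immediately takes a neighbour and threatens at once. If Breaker takes a neighbour of $c$, Maker takes the opposite neighbour, and the remaining free neighbours still allow a triangle with two open apexes. The patch must be large enough to hold all these apexes; its size, plus the boundary rhombi lost between tiles, is what fixes the constant. I would choose the hexagonal tile so that the number of rhombi assigned to it is exactly $78$, as the theorem states, and verify that count at the end, analogously to the count of $15$ squares in Theorem~\ref{thm:SCsquares=3}.
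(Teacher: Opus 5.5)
Your local play (take the center, then the neighbour furthest from Breaker's move, then the other side of that edge, finishing with two open apexes) is essentially the paper's 6-star strategy. The tiling argument around it, however, has a genuine gap.

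\textbf{The gadget must work with Maker moving second, and yours does not.} Under your rule that Maker answers only inside the tile Breaker just entered, Breaker moves first in every tile. Your gadget opens with Maker claiming the center $c$, and nothing stops Breaker from claiming $c$ himself with his first move there. Non-zugzwang does not rescue this. It says an extra move never hurts, so it reduces the Maker-first case to the Maker-second case (which is how the $G_{3,5}$ lemma uses it), not the converse. Moreover, take the natural patch: the 6-star $c,a_1,\dots,a_6$ plus the six outer vertices across the edges $a_ia_{i+1}$, which are needed as second apexes. Its only triangles are $ca_ia_{i+1}$ and $a_ia_{i+1}x_i$, and no two outer triangles share an edge. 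So every rhombus in the patch contains $c$, and a Breaker-first tile gives Maker nothing. Choosing the tile ``so that exactly 78 rhombi are charged to it'' fits the constant rather than deriving it. As written, you never exhibit a patch on which a second-moving Maker forces a rhombus.

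\textbf{The missing idea is a factor $1/2$, and that is exactly where $78$ comes from.} The paper spaces the 6-stars at one star per $13$ vertices, i.e.\ per $39$ rhombi, since $n\approx 3|V|$. Maker scores a rhombus only in stars where she moves first, and she can be first in at least half of them. For instance, when Breaker enters an untouched star, she answers with the center of another untouched star.

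In your framework the repair is to let a tile consist of two such $13$-vertex patches. Whatever Breaker's first move in the tile, Maker claims the center of the patch he did not touch and runs the one-star strategy there. She answers moves in the other patch arbitrarily, since extra Maker vertices never hurt. This tile has $26$ vertices, hence $78$ rhombi, and yields one point.
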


\begin{proof}
Maker uses the tiling with 6-stars depicted in Figure~\ref{fig:rhombus-maker-s4}. If Maker is the first to play within the 6-star, she claims the middle vertex. Then, once Breaker plays within that 6-star, Maker responds by claiming a vertex adjacent to the middle that is the furthest away from the Breaker's move, thus creating a Maker's edge. After the second move of Breaker within the 6-star Maker plays ``on the other side'' of the Maker's edge, thus creating a triangle with two possible ways to extend it to a rhombus. By simply pairing those two extensions Maker claims a whole rhombus.

This guarantees one Maker's rhombus per two 6-stars, completing the proof.
\begin{figure}[htb]
    \centering
    \includegraphics[scale=1.5]{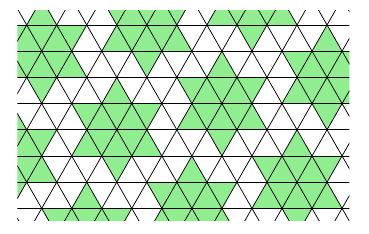}
     \caption{Strategy for Maker, rhombus game with $s=4$.}
    \label{fig:rhombus-maker-s4}
\end{figure}
\end{proof}

%%%%%%%%%%%%%%%%%%%%%%%%%%%%%%%%%%%%%%%%%%%%%
\section{Hexagonal grid, hexagon, $k=6$}\label{sec:hexagon}

\subsection{Hexagon with $s\in\{1,2,5,6\}$}

\begin{observation}
\label{obs:hexa-s1256}
$\scr(\cG_\hexa, 1) = \scrp(\cG_\hexa, 1) = \scr(\cG_\hexa, 2) = \scrp(\cG_\hexa, 2) = n$ and $\scr(\cG_\hexa, 5) = \scrp(\cG_\hexa, 5) = \scr(\cG_\hexa, 6) = \scrp(\cG_\hexa, 6) = 0$.
\end{observation}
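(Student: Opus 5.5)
Since $\scrp \le \scr \le n$ always holds by Observation~\ref{obs:2}, the claims for $s=1,2$ reduce to exhibiting one pairing of the hexagonal grid under which every hexagon gets at least two of Maker's vertices. The claims for $s=5,6$ then follow by a Breaker version of the same pairing. Here the winning sets are the 6-vertex faces of the honeycomb, and the board is its vertex set.

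For $s=2$ (which implies $s=1$), I would take a perfect matching of the honeycomb graph in which every hexagon contains exactly two matched edges. The natural candidate is the ``brick wall'' matching. Write the honeycomb as horizontal zigzag rows joined by vertical edges, and pair the vertices along the vertical edges only. Each hexagon has exactly two vertical edges, its left and right sides, and these are disjoint. Each vertex lies on exactly one vertical edge, so this is a perfect pairing of the board. Maker claims one endpoint of each pair, so she gets at least one vertex on each of the two vertical sides of every hexagon, hence at least $2$ vertices in every hexagon. This gives $\scrp(\cG_\hexa,2)=n$ up to the boundary $o(n)$ term. The main check is that the vertical edges really form a perfect matching covering two disjoint sides of every face. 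This holds for the standard honeycomb, because every vertex has exactly one vertical incident edge.

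For $s=5,6$, the plan is to let Breaker use the same pairing: whenever Maker claims a vertex, Breaker claims its partner. If Maker takes the last unpaired vertex, or if a partner is already taken, Breaker plays arbitrarily, which only helps him. Then every hexagon contains at least one Breaker vertex from each of its two vertical sides. So Maker has at most $4<5$ vertices in it and scores nothing, giving $\scr(\cG_\hexa,5)=\scrp(\cG_\hexa,5)=0$, and a fortiori the same for $s=6$.

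Alternatively, the $s=5,6$ part can be phrased through Observation~\ref{obs:1}: Breaker achieving $2=6-5+1$ vertices in every hexagon is the mirrored $s=2$ statement. The direct pairing argument avoids the $O(\Delta)$ error term, so I would use it. The only subtlety, apart from boundary effects, is that Maker moves first. Breaker's pairing strategy handles this, since Maker's extra first move never breaks a pair response.
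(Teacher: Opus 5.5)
Your proposal is correct and is essentially the paper's proof. Your vertical-edge (brick wall) matching is exactly the paper's pairing along all horizontal edges (Figure~\ref{fig:hex1}) with the honeycomb rotated by $90^\circ$; it gives every hexagon two disjoint pairs, and the only difference is that for $s\in\{5,6\}$ you let Breaker play the same pairing directly instead of appealing to Observation~\ref{obs:1}, which has the minor advantage of giving the value $0$ exactly, without the $O(\Delta(\cF))$ correction.
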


\begin{proof}
Using Observation~\ref{obs:1}, it is sufficient to provide a pairing for $s\in\{1,2\}$ such that every hexagon of the grid contains two pairs. One such pairing consists of pairs along all horizontal edges of the grid, see Figure~\ref{fig:hex1}. Note that there is another pairing that also works, see Figure~\ref{fig:hex2}.
\end{proof}

\begin{figure}[ht]
    \centering
    \begin{minipage}{0.40\textwidth}
        \centering
        \includegraphics[scale=1.1]{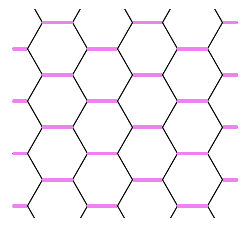}
        \caption{Horizontal pairing.}
        \label{fig:hex1}
    \end{minipage}
    \quad\quad
    \begin{minipage}{0.40\textwidth}
        \centering
        \includegraphics[scale=1.1]{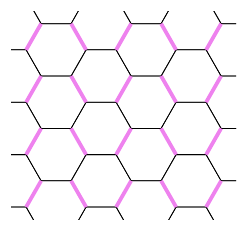}
        \caption{Zig-zag pairing.}
        \label{fig:hex2}
    \end{minipage}
\end{figure}

\subsection{Hexagon with $s=3$}

%$\blacktriangleright$ State of the art: $3n/4 \leq \scrp(\cG_\hexa, 3) \leq \scr(\cG_\hexa, 3) \leq 5n/6$.

From Theorem~\ref{thm:upper_pairing} we get $\scrp(\cG_\hexa, 3) \leq 85n/96$.

Note that if we apply either of the two pairings in figures~\ref{fig:hex1} and~\ref{fig:hex2}, we get $\scrp(\cG_\hexa, 3) \geq n/2$. To see this, note that, for each of the two pairings, Breaker can choose which element of each pair Maker will claim, such that half of the hexagons have only two Maker's elements. But, it turns out that there is a pairing offering a better score.

\begin{figure}[ht]
    \centering
    \includegraphics[scale=1.1]{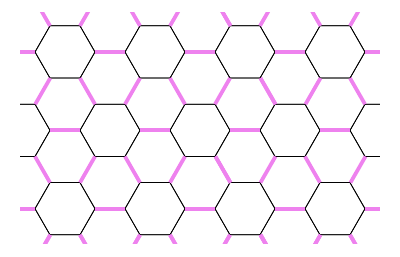}
    \caption{Flower pairing.}
    \label{fig:hex3}
\end{figure}

\begin{theorem} \label{thm:hex-s3-pairing}
$\frac34 n \leq \scrp(\cG_\hexa, 3)$.
\end{theorem}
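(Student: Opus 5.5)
I will exhibit an explicit Maker pairing, namely the flower pairing of Figure~\ref{fig:hex3}, and show that it guarantees Maker at least $3$ vertices in at least three quarters of the hexagons. The structure I would use is the standard one in the hexagonal (honeycomb) grid. Choose a sublattice of hexagons, the \emph{flower centers}, of density $1/4$. Each center hexagon $C$ has six neighbouring hexagons, and every non-center hexagon is adjacent to exactly three centers. In the flower pairing, each of the six vertices of a center hexagon is paired with a vertex lying outside $C$, along the edge leaving $C$ radially (the spoke). Thus all pairs are spoke edges, and every vertex of the board lies in exactly one spoke of exactly one flower. First I would check this from the figure: in the honeycomb with centers of density $1/4$, the centers' vertex sets are disjoint, they cover half of the vertices, and the radial edges out of them give a perfect matching.

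Next I would count Maker's vertices per hexagon. A center hexagon $C$ contains no pair internally, since its six paired partners lie outside. So Maker may get anywhere from $0$ to $6$ vertices in $C$, and centers cannot be guaranteed. A non-center hexagon $H$ is adjacent to three centers. I would verify from the picture that $H$ contains three spoke edges entirely, one coming from each adjacent center, on alternating sides of $H$. Since those three pairs lie inside $H$, Maker claims exactly one vertex of each, so she gets at least $3$ vertices of $H$ and scores it. The non-centers therefore give $3n/4$ points regardless of Breaker's play, which already proves the statement.

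The main obstacle is making sure that the geometry really works: every non-center hexagon must contain three full spokes and thus three disjoint pairs. This should be checked carefully on the honeycomb. Label the center sublattice by an index-$4$ sublattice of the triangular lattice of hexagon centers, namely $2\mathbb Z^2$ in lattice coordinates. Each non-center hexagon then has exactly two center neighbours along each of three directions, giving three center neighbours in total. The edges that $H$ shares with its center neighbours alternate with edges that are spokes. If this count failed, I would fall back on the averaging argument used in Theorem~\ref{thm:scrp-trig-s2}: since every vertex is paired, the average number of Maker vertices per hexagon is exactly $3$, and I would combine this with guaranteed local structure. Finally, the $o(n)$ boundary effects are ignored, as throughout the paper.
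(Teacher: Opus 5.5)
Your direct argument cannot work, because the geometry you describe is inconsistent. The board has $2n$ vertices (six per hexagon, each vertex on three hexagons). So a pairing has at most $n$ pairs, and each pair lies inside at most two hexagons. Hence at most $2n/3$ hexagons can contain three pairs entirely, and it is impossible for all $3n/4$ of your non-centers to do so.

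Concretely, take centers at the index-$4$ sublattice $2\mathbb{Z}^2$. Then:
\begin{enumerate}
\item[(a)] the centers cover $6\cdot n/4=3n/2$ of the $2n$ vertices, not half;
\item[(b)] each non-center hexagon has exactly two center neighbours, on opposite sides, not three;
\item[(c)] the radial edges do not form a matching: each of the $n/2$ vertices off the centers is the endpoint of three spokes coming from three different centers.
\end{enumerate}
What you describe in words is the actual flower pairing: each non-center adjacent to three centers on alternating sides, with the spokes forming a perfect matching. In it the flowers form one colour class of the proper $3$-colouring of the hexagons, so they have density $1/3$, not $1/4$. There the flowers partition the vertex set. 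Only the $2n/3$ non-flower hexagons are guaranteed, each receiving exactly three Maker vertices.

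The missing $n/12$ must come from the flower hexagons, and this is exactly the step you leave as a vague fallback. Every vertex is paired, so Maker claims exactly $n$ vertices. Since the $n/3$ flowers partition the vertex set, all of these lie on flowers. Let $m_1$ be the number of flowers with at least three Maker vertices (at most six each) and $m_2$ the number with at most two. Then $6m_1+2m_2\ge n$ and $m_1+m_2=n/3$, hence $m_1\ge n/12$, and the total is $2n/3+n/12=3n/4$. This averaging over the flowers is the core of the paper's proof; without it, and with the correct density, your argument yields only $2n/3$.
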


%\mSt{I did not take the grid boundary into account.}

\begin{proof}
Suppose that Maker applies the pairing depicted in Figure~\ref{fig:hex3}. Then $2n/3$ hexagons have three pairs of paired vertices on their edges, resulting in exactly three Maker's claimed elements in every hexagon. 

We will refer to the remaining $n/3$ hexagons, depicted in Figure~\ref{fig:hex3} with all edges purple, as \emph{flower hexagons}. Note that each vertex of a flower hexagon is paired with a vertex of another flower hexagon. That means that by the end of the game, Maker will have claimed exactly one half of the total number of vertices on the flower hexagons, that is, $\frac12 \cdot \frac{6n}{3} = n$ vertices. 

Let us denote by $m_1$ the number of flower hexagons with at least $s=3$ Maker's vertices, and by $m_2$ the remaining flower hexagons (that have less than $3$ Maker's vertices). Then we have $6m_1+2m_2 \geq n$, with $m_1 + m_2 = n/3$, concluding that $m_1\geq n/12$.
Hence, following the flower pairing Maker can guarantee at least $2n/3 + n/12 = 3n/4$ good hexagons.
\end{proof}

Note that the lower bound in Theorem~\ref{thm:hex-s3-pairing} is the best we can achieve with the flower pairing, as Maker can be given one vertex in each pair such that the number of good hexagons is exactly $3n/4$.

As a direct consequence of Theorem~\ref{thm:hex-s4-Maker-nopairing}, we get the following upper bound.

\begin{corollary} \label{cor:hex-s3-ub}
$\scr(\cG_\hexa, 3) \leq \frac{5}{6} n$.
\end{corollary}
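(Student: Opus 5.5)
The corollary is the $s=3$ side of Observation~\ref{obs:1} applied to the $s=4$ result. I would take the lower bound $\scr(\cG_\hexa,4)\ge n/6$ from Theorem~\ref{thm:hex-s4-Maker-nopairing} (Maker's strategy for $s=4$, listed in Table~\ref{tab:results}) and let Breaker use it in the $3$-of-$6$ game, playing the role of the first player of the $4$-of-$6$ game.

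First I would record the complementarity of the two games. With $k=6$ and $s=3$, a hexagon is bad for Maker exactly when she owns at most $2$ of its vertices. At the end of the game every vertex is claimed, so this happens exactly when Breaker owns at least $4=k-s+1$ of its vertices. Hence the number of bad hexagons in the $3$-of-$6$ game equals the number of hexagons on which Breaker reaches $4$ vertices. Maker's score is therefore $n$ minus the number of hexagons Breaker reaches $4$ on, and Breaker wants to maximize that number.

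Next I would have Breaker steal the strategy of Theorem~\ref{thm:hex-s4-Maker-nopairing}. Breaker moves second, so I would use the standard argument from the proof of Lemma~\ref{lem:kodd}. Breaker pretends he made an arbitrary first move $v$ and then follows the $s=4$ Maker strategy. Whenever that strategy tells him to claim a vertex he already holds as imaginary, or a vertex that is already taken, he claims an arbitrary free vertex and updates his imaginary move. Extra vertices never hurt Breaker, since all he wants is at least $4$ vertices per hexagon. The only loss is that the imaginary vertex may in the end belong to Maker. That affects only the hexagons containing one vertex, and in the hexagonal grid there are at most $3$ of them, which is $O(1)=o(n)$. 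So Breaker reaches $4$ vertices on at least $n/6-o(n)$ hexagons. If the strategy of Theorem~\ref{thm:hex-s4-Maker-nopairing} is a local one, applied tile by tile, I would check that the tiles are handled in a way that works whoever moves first on each tile, as in the proof of Theorem~\ref{thm:SCsquares=3}. In that case Breaker needs no stealing at all, only to follow the local responses.

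Combining the two steps gives $\scr(\cG_\hexa,3)\le n-n/6+o(n)=\tfrac56 n+o(n)$. The paper suppresses $o(n)$ terms, as it does in Observation~\ref{obs:1}. The main obstacle is making sure the $s=4$ strategy tolerates being played second, or by a player who already holds extra vertices. Monotonicity handles the extra vertices, and the constant-degree boundary effect handles the rest.
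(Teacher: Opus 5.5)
Your argument is correct and follows the paper's route. The paper obtains the corollary by combining the complementarity of Observation~\ref{obs:1} (a hexagon is bad in the $3$-of-$6$ game exactly when Breaker owns at least $4$ of its vertices) with the $n/6$ lower bound of Theorem~\ref{thm:hex-s4-Maker-nopairing}, at an $O(\Delta)=O(1)$ cost.

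Your fallback remark is the cleanest way to justify the step. The $s=4$ strategy there is purely responsive: it uses only the red pairing and replies as second player on each $2\times 2$ subgrid. So Breaker can apply it verbatim as the second player, and no strategy stealing with an imaginary first move is needed.
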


\subsection{Hexagon with $s=4$}

%$\blacktriangleright$ State of the art: $\frac{1}{10} n\leq \scrp(\cG_\hexa, 4)$, $\frac16 n\leq \scr(\cG_\hexa, 4) \leq \frac{1}{4} n$

From Theorem~\ref{thm:hex-s3-pairing}, we get $\scr(\cG_\hexa, 4) \leq n/4 + o(n)$.

\begin{theorem} \label{thm:hex-s4-Maker-pairing}
$\frac{n}{10}\leq \scrp(\cG_\hexa, 4)$.
\end{theorem}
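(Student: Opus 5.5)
We will exhibit an explicit pairing for Maker on the hexagonal grid. We then show by a local counting argument that a $1/10$ fraction of the hexagons are forced to contain at least four of Maker's vertices. This follows the template of the flower pairing of Theorem~\ref{thm:hex-s3-pairing} and of the triangle pairing of Theorem~\ref{thm:Maker-s3}.

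The natural device is a hexagon with three pairs on its boundary edges. Such a hexagon receives exactly three Maker vertices, so it is ``one short''. Our pairing would distinguish a periodic family of \emph{target} hexagons, say one in every five. Each target hexagon $H$ has three of its six vertices paired along two of its edges, or otherwise internally to $H$. The remaining vertices of $H$ are paired outward to vertices of a neighbouring hexagon $H'$. We choose the outward pairs so that $H$ and $H'$ share a fixed internal budget: if Maker's vertex from an outward pair falls in $H$, this helps $H$, and otherwise it helps $H'$. Concretely, we aim for a block $B$ consisting of two target hexagons $H,H'$ whose union of vertices (ten vertices, if they are adjacent and share an edge) is paired entirely within $B$, except for a controlled number of pairs. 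The key point is that two adjacent hexagons have $10$ vertices, and a pairing internal to those $10$ vertices gives Maker exactly $5$ of them. This alone does not force $4$ in one hexagon, so we must add bias.

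We therefore enlarge the block. Take a cluster of hexagons whose vertex set is paired internally, so that Maker receives exactly half of it. Also arrange that vertices shared by several cluster hexagons are overrepresented among the ones Maker must get, as in the propagation trick of Theorem~\ref{thm:fractal-pairing}. Pairs whose two endpoints are both on one hexagon guarantee Maker a vertex there. With three such disjoint pairs on a hexagon $H$, Maker already has $3$ vertices on $H$. It then suffices that one of the hexagons sharing the pairs gets a fourth vertex from pairs straddling two hexagons. For instance, take two hexagons $A$ and $B$, each with three internal pairs. If a pair $(a,b)$ has $a$ in $A\setminus B$ and $b$ in $B\setminus A$, then whichever one Maker takes gives that hexagon a fourth vertex. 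So we look for a periodic pattern where such $A,B$ exist, with $A$ and $B$ disjoint or sharing only the vertices already paired. Every two such triple-paired hexagons linked by a cross pair yield one good hexagon.

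The counting step is to choose the tiling so that the density of such linked pairs is $1$ per $10$ hexagons. On the hexagonal grid there are $2n$ vertices. A hexagon with three internal pairs along alternate edges uses $6$ vertices, and its three edges block neighbours. A vertex-disjoint family of such hexagons with an extra cross pair each needs, per good hexagon, two full hexagons ($12$ vertices, or $14$ with the cross pair lying outside them). We must fit these into a fundamental domain of $10$ hexagons, i.e.\ $20$ vertices, with the leftover vertices paired arbitrarily. The main obstacle is the geometric step. We must find an explicit periodic placement in which the cross pair's endpoints lie on $A$ and $B$ but not among their internal pairs. Since each hexagon has only $6$ vertices, this conflicts with three internal pairs. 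We would therefore relax $A$ to have two internal pairs plus one pair to a vertex shared with a neighbour, checked case by case on a picture. We expect the final figure to give one good hexagon per ten, and the proof to be completed by summing over the tiles while ignoring $o(n)$ boundary effects.
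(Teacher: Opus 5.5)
Your proposal has a genuine gap. It never exhibits a pairing, and both mechanisms you sketch for forcing a fourth Maker vertex cannot work.

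\textbf{The cross-pair mechanism.} Against a pairing, Breaker decides which endpoint of every pair Maker gets. So the only vertices a hexagon is guaranteed are those of pairs lying entirely inside it. Three such pairs use up all six of its vertices, so the configuration of two triple-paired hexagons $A,B$ joined by a cross pair does not exist, as you noticed. In your relaxation, where $A$ has only two internal pairs, each outward pair can be awarded to the other side. For example, with cross pairs $a_1b_1$ and $a_2b_2$, Breaker gives one to $A$ and one to $B$, and both hexagons stay at $3$.

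\textbf{The overrepresentation mechanism.} Making shared vertices \emph{overrepresented} among Maker's vertices in an internally paired cluster $S$ is impossible. If Breaker always gives Maker the endpoint lying in fewer hexagons of $S$, the number of Maker incidences with $S$ is at most $\sum_{\{x,y\}}\min(\deg_S(x),\deg_S(y))\le\frac12\sum_v\deg_S(v)=3|S|$. A flower-type averaging argument works for $s=3$ precisely because the threshold equals this average. For $s=4$, no counting over a fixed cluster can force a hexagon with four Maker vertices. The proof has to be a case analysis over Breaker's choices. The $1/10$ density in your last paragraph is hoped for, not derived.

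\textbf{The missing idea.} What you need is the nested decision structure of Theorem~\ref{thm:Maker-s3}, which you cite but do not use: one pair decides a vertex lying on several target hexagons, so a single choice advances all of them at once. The paper does this as follows.
\begin{enumerate}
\item It tiles the grid with triplets of hexagons around a common vertex, couples the triplets two by two, and pairs the two centres (purple). Maker thus owns the centre of one triplet, and that centre lies on all three of its hexagons.
\item Within that triplet, a red pair decides between the upper hexagon and the two lower ones. The green pairs supply the remaining guaranteed vertices.
\item In the lower branch, an orange pair decides between the two lower hexagons.
\end{enumerate}
Every branch ends with a hexagon holding four Maker vertices. The count of one such couple per ten hexagons then comes from the explicit tiling, not from a vertex-budget estimate like your $12$ or $14$ vertices inside $20$.
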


\begin{proof}
We give a pairing strategy for Maker. To do that, we use the tiling with hexagon triplets shown in Figure~\ref{fig:hex-pairing-tiling}. In each row, triplets are coupled two-by-two, from left to right. Within each couple of triplets, the pairing is depicted in Figure~\ref{fig:hex-zoom-in}, where lines in green, red and purple denote the paired vertices.
\begin{figure}[htb]
        \centering
        \includegraphics[scale=1]{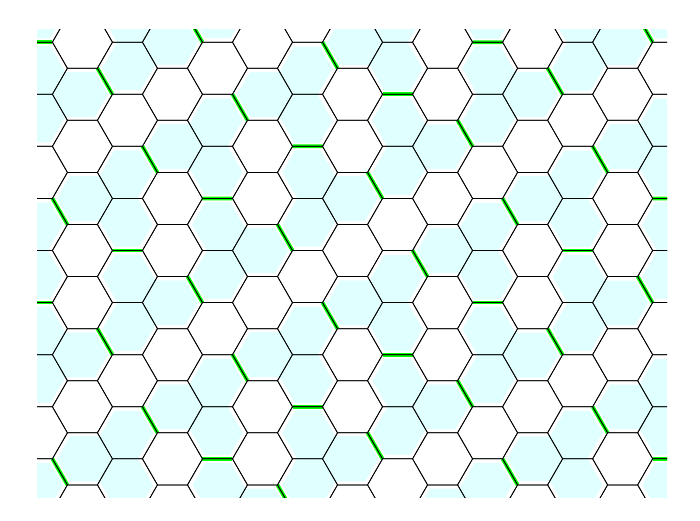}
        \caption{Tiling with hexagon triplets, for the proof of Theorem~\ref{thm:hex-s4-Maker-pairing}.}
        \label{fig:hex-pairing-tiling}
\end{figure}

\begin{figure}[htb]
        \centering
        \includegraphics[scale=1.3]{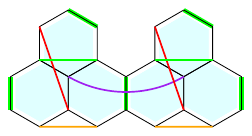}
        \caption{Pairing within two coupled triple hexagons.}
        \label{fig:hex-zoom-in}
\end{figure}

It remains to prove that Maker can score at least one hexagon in every couple of hexagon triplets, and that will result in a score of one per every ten hexagons. To show this, first note that the purple pair ensures that Maker will claim the vertex at the center of one of the two triplets. 

From now on we concentrate on the triplet of hexagons with Maker's center. One of the two vertices in the red pair is going to be claimed by Maker. If it is the upper vertex, then the upper hexagon will be scored by Maker. If, however, it is the lower vertex, then the low orange pair ensures that she will score on one of the two lower hexagons.
\end{proof}

\begin{theorem} \label{thm:hex-s4-Maker-nopairing}
$\frac{n}{6}\leq \scr(\cG_\hexa, 4)$.
\end{theorem}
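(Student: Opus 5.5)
We need Maker (not restricted to pairing) to guarantee $n/6$ hexagons with at least 4 of her vertices. Equivalently, via Observation~\ref{obs:1}, Breaker cannot get $\geq 3$ vertices in more than $5n/6$ hexagons, which also yields Corollary~\ref{cor:hex-s3-ub}. The natural approach, mirroring Theorem~\ref{thm:rhombus-maker-s4} and Theorem~\ref{thm:SCsquares=3}, is a local adaptive strategy on a tiling. Partition the honeycomb into disjoint gadgets, each a small cluster of hexagons. On each gadget Maker plays a short adaptive strategy guaranteeing one good hexagon, in the style of a threat-based Maker--Breaker win. Maker answers each Breaker move inside the gadget where it was made, and if that gadget is already settled she plays anywhere. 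Following the paper's convention we ignore $o(n)$ boundary effects, and we use the non-zugzwang property (scoring games are in Milnor's universe) to argue that starting second on a gadget is the worst case.

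To reach the density $1/6$, I would take as gadget a ``triple'' of three pairwise adjacent hexagons sharing a central vertex $c$, plus enough surrounding vertices so that the gadgets, with their attached hexagons, tile the plane at a rate of one secured hexagon per six hexagons. Note that triples sharing a common vertex cover the hexagon grid with a density of $n/3$ triples when tiling vertices appropriately. So it suffices that two triples together form one gadget, or that each gadget occupies the vertex set of six hexagons.

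The per-gadget claim is that on a region consisting of a central hexagon $H$ together with its private surrounding vertices, Maker moving second can secure 4 of the 6 vertices of some hexagon. A plausible local strategy is the following. Maker claims a vertex in the triple center region away from Breaker's move. She then builds two adjacent claimed vertices shared by two hexagons, which creates a double threat. Breaker cannot block both hexagons, so after the next move Maker has 3 vertices on some hexagon, and she finishes with a pairing of two remaining vertices of that hexagon, exactly as in the triangle and rhombus arguments. The key requirement is that the hexagons used in the threats belong only to this gadget. Then disjoint gadgets do not interfere, and Breaker's moves outside a gadget give Maker only extra moves.

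The main obstacle is the actual local case analysis. Breaker moves first on the gadget and may place his stones so as to spoil several candidate hexagons at once, because a vertex lies in three hexagons. So the gadget must be large enough (I would try six hexagons' worth of vertices, i.e., a hexagon and its ring structure trimmed to a tileable shape) to absorb two or three Breaker moves, while still being small enough to give density $1/6$. I would first attempt the gadget ``three hexagons around a vertex $c$ together with the three hexagons adjacent to two of them'', check by case analysis on Breaker's first move (center vs.\ periphery, using symmetry) that Maker creates two disjoint threats, and then verify that this shape tiles the honeycomb. If hand analysis becomes unwieldy, a small exhaustive computer check of the gadget game, as was done for $C_{14}$, would confirm it.
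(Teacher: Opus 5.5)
Your plan has the right overall shape: a tiling by gadgets, Maker answering second in each, and two adjacent Maker vertices common to two hexagons as a double threat. But it is not yet a proof. The gadget is never fixed, the tiling is not checked, and the local case analysis is deferred to a hypothetical computation.

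More seriously, the framework you commit to cannot reach $n/6$. You require the gadgets to be vertex-disjoint, and every hexagon used in a threat to lie entirely inside its gadget. Your finishing step, pairing two remaining vertices of the threatened hexagon, is what forces this.
\begin{itemize}
\item A set of at most $12$ vertices contains at most two complete hexagons. Three hexagons span at least $13$ vertices, with equality for three hexagons around a vertex.
\item On two complete hexagons Breaker, moving first, always wins locally; the other gadget vertices are irrelevant, since only complete hexagons count. If the hexagons share an edge $uv$, he claims $u$ and splits the other four vertices of each hexagon into two pairs, so Maker gets at most $1+2=3$ vertices in each. If they are disjoint, he splits each into three pairs.
\item Hence every gadget in your scheme needs at least $13$ vertices. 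The grid has $2n+o(n)$ vertices, so there are at most $(2n+o(n))/13$ disjoint gadgets, and the guaranteed score is at most $2n/13+o(n)<n/6$.
\end{itemize}
Your candidate gadgets (a triple, with $13$ vertices; six hexagons, with more than $20$) already fall into this regime.

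The missing idea is to let neighbouring gadgets share vertices through a fixed pairing. In the paper the grid is tiled by $2\times 2$ hexagonal subgrids, and certain vertex pairs (the red pairs) are shared between neighbouring subgrids. Maker answers Breaker on these by pairing, and her guaranteed vertex on a red pair serves the hexagons on both sides. On the remaining vertices of each subgrid Maker answers second, using only three adaptive moves:
\begin{itemize}
\item a central vertex ($2$ candidates, while Breaker has played at most $1$ vertex there);
\item a neighbour of it, so her two vertices are common to two adjacent hexagons ($3$ candidates against $2$);
\item one of the $4$ remaining non-red vertices of those two hexagons ($4$ candidates against $3$).
\end{itemize}
The red pair on the hexagon she completes supplies the fourth vertex. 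Your double threat is essentially her first two moves. The difference is that the fourth vertex comes from a shared pair rather than from a private pairing inside the hexagon, which lowers the vertex cost per secured hexagon enough to give density $1/6$.
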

\begin{proof}
We tile the hexagonal grid with $2\times 2$ hexagonal subgrids as depicted in Figure~\ref{fig:hexa-s4-tiling}. Maker combines a pairing strategy on the red pairs, shown in the figure to be shared between the neighboring subgrids, and a more sophisticated strategy on the remaining vertices of each subgrid.
\begin{figure}[htb]
    \centering
    \includegraphics[scale=1]{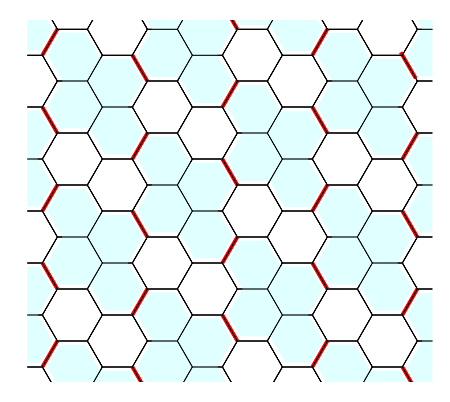}
     \caption{Tiling for Maker, hexagon game with $s=4$.}
    \label{fig:hexa-s4-tiling}
\end{figure}

On each $2\times 2$ hexagonal subgrid Maker will respond as second, meaning that she will repeatedly wait for the Breaker's move in that subgrid to respond. In her first move, Maker will claim one of the two central vertices marked yellow, as depicted in Figure~\ref{fig:hexa-s4-subgrid} on the left.
\begin{figure}[ht]
    \centering
    \begin{minipage}[c]{0.28\linewidth}
        \includegraphics[page=1,width = 3cm]{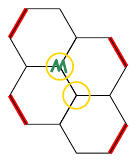}
    \end{minipage}
    \begin{minipage}[c]{0.28\linewidth}
        \includegraphics[page=2,width = 3cm]{figures/hexa-M-123-new.pdf}
    \end{minipage}
    \begin{minipage}[c]{0.28\linewidth}
        \includegraphics[page=3,width = 3cm]{figures/hexa-M-123-new.pdf}
    \end{minipage}
    \caption{Maker's strategy on a $2\times 2$ hexagonal subgrid, in the hexagon game with $s=4$.}
    \label{fig:hexa-s4-subgrid}
\end{figure}
W.l.o.g.~we can assume that it is the one marked with `M'. Then in her second move, she claims one of the three vertices marked yellow in the middle figure, say the one marked with `M'. This way, the first two Maker's moves are common for two neighboring hexagons. Finally, in her third move Maker will claim one out of the four remaining vertices on those two hexagons, marked yellow in the right figure, thus scoring on one of the hexagons (having in mind that Maker will claim a vertex on every red pair).

Note that in each of Maker's moves, the number of candidate vertices marked yellow is strictly larger than the number of vertices Breaker claimed within that subgrid up to that point.
\end{proof}

\section*{Conclusion}

This paper advances the understanding of scoring positional games, building on the initial work of~\cite{scoringPG}. We studied the games on grids, and utilized several techniques that yield both upper and lower bounds, for optimal as well as pairing strategies.
Since most of these bounds are not tight, a number of open problems remain. Some situations in which the gap between the lower and upper bounds is relatively small are particularly notable, such as, e.g., for $\scr(\cG_\hexa,3)$.

One problem we find especially intriguing is the full rhombus game for pairing Maker, where we conjecture that $\scrp(\cG_\rhom,4)=0$. Given that we established a strategy guaranteeing Maker a score of at least $n/78+o(n)$, it would be very interesting to determine whether there exists a pairing strategy that gives Maker any positive score. This appears to be a difficult question, since no general approach is known for problems of this type, other than exhaustive exploration.

\section*{Acknowledgement}
This research was supported by the ANR project P-GASE (ANR-21-CE48-0001-01), and it was supported by COST Action CA22145 GameTable. Milo\v{s} Stojakovi\'c was partly supported by the Science Fund of the Republic of Serbia, Grant \#7462: Graphs in Space and Time: Graph Embeddings for Machine Learning in Complex Dynamical Systems (TIGRA), and partly supported by the Ministry of Science, Technological Development and Innovation of the Republic of Serbia (grants 451-03-33/2026-03/200125 \& 451-03-34/2026-03/200125).

\end{document}